\providecommand{\compl}{\mathsf{compl}}
\providecommand{\ucc}{\mathsf{ucc}}
\providecommand{\cc}{\mathsf{cc}}
\providecommand{\ent}{\mathsf{Ent}}
\newcommand{\ignore}[1]{}
\definecolor{cb-salmon-pink}{RGB}{255, 182, 119}
\definecolor{ref-color}{RGB}{200, 0, 200}
\newcommandx{\unsure}[2][1=]{\todo[linecolor=red,backgroundcolor=red!25,bordercolor=red,#1]{#2}}
\newcommandx{\change}[2][1=]{\todo[linecolor=blue,backgroundcolor=blue!25,bordercolor=blue,#1]{#2}}
\newcommandx{\info}[2][1=]{\todo[linecolor=OliveGreen,backgroundcolor=OliveGreen!25,bordercolor=OliveGreen,#1]{#2}}
\newcommandx{\improvement}[2][1=]{\todo[linecolor=Plum,backgroundcolor=Plum!25,bordercolor=Plum,#1]{#2}}
\newcommandx{\thiswillnotshow}[2][1=]{\todo[disable,#1]{#2}}
\title{More Efficient $k$-wise Independent Permutations from Random Reversible Circuits via log-Sobolev Inequalities}
\author{
Lucas Gretta\thanks{UC Berkeley. Email: \url{lucas_gretta@berkeley.edu}.}
\and
 William He\thanks{Carnegie Mellon University. Email: \url{wrhe@cs.cmu.edu}. Supported in part by ARO grant W911NF2110001.}
 \and 
 Angelos Pelecanos\thanks{UC Berkeley. Email: \url{apelecan@berkeley.edu}.} }
\begin{document}
\allowdisplaybreaks
\maketitle
\begin{abstract}
    We prove that the permutation computed by a reversible circuit with $\wt{O}(nk\cdot \log(1/\epsilon))$ random $3$-bit gates is $\epsilon$-approximately $k$-wise independent.    
    Our bound improves on currently known bounds in the regime when the approximation error $\epsilon$ is not too small. 
    We obtain our results by analyzing the log-Sobolev constants of appropriate Markov chains rather than their spectral gaps.
\end{abstract}
\tableofcontents

\newpage
\section{Introduction}

We consider the extent to which small random reversible circuits compute \emph{almost $k$-wise independent} permutations. The (almost) $k$-wise independence of permutations was first considered by Gowers~\cite{gowers1996almost} as a proxy for pseudorandomness properties of practical cryptosystems, such as block ciphers. 
\begin{definition}[Approximate $k$-wise independent permutations]
    A distribution $\mathcal{P}$ on the symmetric group $S_{[N]}$ is said to be \emph{$\varepsilon$-approximate $k$-wise independent} if for all distinct $x_1, \dots, x_k \in [N]$, the distribution of $(\boldsymbol{g}(x_1), \dots, \boldsymbol{g}(x_k))$ for $\boldsymbol{g} \sim \mathcal{P}$ has total variation distance at most $\varepsilon$ from the uniform distribution on distinct $k$-tuples over~$[N]$. 
\end{definition}

A commonly studied construction of approximate $k$-wise independent permutations is a reversible circuit on $n$ wires in which each gate computes a randomly chosen width-2 (see \Cref{def:width 2}) permutation on a random subset of $3$ wires. From here on, when referring to a random reversible circuit, we mean a random circuit whose gates are drawn randomly from a set of $3$-bit gates. Gowers~\cite{gowers1996almost} introduced this construction and proved that a random reversible circuit with $\mathrm{poly}(n,k,\log(1/\varepsilon))$ gates computes an $\varepsilon$-approximate $k$-wise independent permutation of the cube $\{0,1\}^n$ using the canonical paths technique from Markov chain mixing~\cite{jerrum2003counting}. Since then, follow-up works by Hoory et al. and Brodsky and Hoory~\cite{HMMR05, BH08} improved on the analysis of Gowers and proved that if $k\leq 2^{n/50}$, then random reversible circuits with $O(n^2 k^2 \log(1/\varepsilon))$ gates compute an $\varepsilon$-approximate $k$-wise independent permutation using the comparison method~\cite{DSC93, DSC93b}. Finally, using quantum-inspired techniques for proving spectral gaps, He and O'Donnell~\cite{he2024pseudorandom} improved the number of gates needed to $ \wt{O}(nk)\cdot (nk+\log(1/\varepsilon))$. 

Random circuits have gained attention following the recent interest in random \emph{quantum} circuits. The natural quantum analog of a (approximate) $k$-wise independent permutation is that of a (approximate) unitary $k$-design.\footnote{A (approximate) unitary $k$-design is a distribution on the unitary group that (approximately) matches the Haar distribution up to $k^{th}$ moments.} Unitary designs are widely studied in quantum computation and quantum physics as basic pseudorandom objects and models for equilibration in quantum many-body systems~\cite{BCH21}. A line of work on unitary $k$-designs~\cite{brandao2016local,haferkamp2021improved} shows that for constant $\varepsilon$, a reversible circuit on $n$ wires with $\wt{O}(n^2\cdot \mathrm{poly}(k))$ random 3-qubit quantum gates chosen from some finite gate set (a random quantum circuit) gives a construction of an $\varepsilon$-approximate unitary $k$-design.

Recent works~\cite{MPSY24, CBBDHX24} obtain $k$-designs with size linear in $k$ from classical $k$-wise independent permutations whose size is also linear in $k$. Even though we demonstrate that a linear-in-$k$ number of random width-$2$ gates suffices to $\varepsilon$-approximate $k$-wise independence, we remark that our dependence on $\varepsilon$ is not sufficiently tight for their $k$-design construction. In particular, both works employ a theorem of Alon and Lovett~\cite{AL12} which requires an exponentially small $\varepsilon$ to translate from approximate to exact $k$-wise independent permutations. Plugging in such a small $\varepsilon$ in our theorem would increase our size bound by polynomial factors in $n$ and $k$.

Another line of work, motivated by the design of practical cryptosystems (such as block ciphers), studies the computational pseudorandomness properties of random reversible circuits. He and O'Donnell~\cite{he2024pseudorandom} consider the computational hardness of inverting the permutation computed by short reversible circuits with $3$-bit gates. Another line of work by Canetti et al.~\cite{CCMR24} proposed more advanced cryptographic primitives based on the cryptographic properties of random reversible circuits. In particular, using the assumption that random reversible circuits achieve computational pseudorandomness after a modest number of rounds (much less than the super-polynomial number of rounds required to reach statistical pseudorandomness), they suggest candidate obfuscation schemes along with possible ways to prove their computational security. Their approach is inspired by thermalizing processes of statistical mechanics. 

In this paper, we revisit the problem of random circuits with reversible $3$-bit gates and show that a random reversible circuit with $\wt{O}(nk\cdot\log(1/\varepsilon))$ gates gives an $\varepsilon$-approximate $k$-wise independent permutation. The following is our main theorem, which we prove in \Cref{sec:generic states}. 


\begin{restatable}[]{theorem}{main}\label{thm:main}
     For any $n$ and $k\leq 2^{n/50}$, a random reversible circuit with $\wt{O}(nk\cdot \log(1/\varepsilon))$ width-$2$ gates (a subset of $3$-bit gates) computes an $\varepsilon$-approximate $k$-wise independent permutation, where the $\wt{O}$ hides $\mathrm{polylog}(n,k)$ factors.
\end{restatable}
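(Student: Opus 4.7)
The plan is to prove the stated bound by establishing a modified log-Sobolev inequality (MLSI) for the natural Markov chain on ordered, distinct $k$-tuples of $[N] = \{0,1\}^n$ induced by one random width-$2$ $3$-bit gate, and then converting entropy decay to a total-variation bound via Pinsker. Let $M$ be this chain: at each step, sample a uniformly random subset $S$ of $3$ wires and a uniformly random width-$2$ gate $g$ on $S$, and apply $g$ coordinate-wise to each of the $k$ strings in the tuple. Since $k \leq 2^{n/50}$, the uniform distribution $\pi$ on distinct $k$-tuples is stationary, and $\log(1/\pi_{\min}) = O(nk)$.

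The key quantity is the modified log-Sobolev constant $\rho(M)$, and assuming one can show $\rho(M) = \wt{\Omega}(1/(nk))$, the rest of the argument is essentially routine: entropy decay gives $D(P^t_x \,\|\, \pi) \leq e^{-\rho(M) t}\cdot \log(1/\pi_{\min}) = O(nk)\cdot e^{-\rho(M) t}$, so choosing $t = (1/\rho(M))\cdot \bigl(\log(nk) + 2\log(1/\varepsilon) + O(1)\bigr) = \wt{O}(nk\log(1/\varepsilon))$ makes $D(P^t_x \,\|\, \pi) \leq \varepsilon^2/2$, which Pinsker converts to the desired $\|P^t_x - \pi\|_{\mathrm{TV}} \leq \varepsilon$. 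The improvement over \cite{he2024pseudorandom} comes from paying only $\log\log(1/\pi_{\min}) \asymp \log(nk)$ (a polylog factor absorbed into $\wt{O}$) rather than the full $\log(1/\pi_{\min}) \asymp nk$ that appears in the $\ell_2$-based bound, which is exactly what produces the $\wt{O}(nk)\cdot nk$ term in the He--O'Donnell mixing-time estimate.

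The main technical obstacle, and the heart of the proof, is establishing $\rho(M) = \wt{\Omega}(1/(nk))$. Unlike the spectral gap, $\rho(M)$ is not directly accessible by eigenvalue or local-Hamiltonian methods, and the standard inequality $2\rho(M) \leq \gamma(M)$ only reproduces the known spectral-gap upper bound, so the quantum-inspired martingale/projector approach of \cite{he2024pseudorandom} does not transfer for free. I would pursue an entropy analog of their local-to-global strategy: replace variance subadditivity on $3$-bit blocks by entropy subadditivity / tensorization, and use block-dynamics style arguments (of Cesi--Martinelli or Jerrum--Son--Tetali flavor) to handle overlapping $3$-bit windows. An alternative route is Diaconis--Saloff-Coste log-Sobolev comparison against a chain whose MLSI is known explicitly, e.g.\ the random transposition walk on $S_N$ restricted to the $k$-tuple orbit (log-Sobolev constant $\Theta(1/(N\log N))$), though a careful multicommodity-flow comparison would be needed to keep the overhead polylogarithmic.

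Once the log-Sobolev bound is in hand, the translation into the final circuit-size bound is short and conceptually standard; essentially all the difficulty lives in the entropy-side analysis of $M$. I expect the main work to be the inductive/decomposition step that upgrades single-gate entropy contraction to a global MLSI with the right $1/(nk)$ scaling, since this is the analog of the delicate spectral-gap step in \cite{he2024pseudorandom} and standard comparisons typically lose more than polylog factors.
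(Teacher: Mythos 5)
Your high-level framing is the same as the paper's: replace the spectral-gap analysis of \cite{he2024pseudorandom} by an entropy-type functional inequality so that the mixing time pays only $\log\log(1/\pi_{\min})\asymp\log(nk)$ instead of $\log(1/\pi_{\min})\asymp nk$, and the conversion from entropy decay to total variation is indeed routine. But the entire technical content of the theorem is the functional inequality itself, and your proposal leaves it as an assumption with two sketched routes, both of which have concrete problems. The comparison route you suggest --- against the random transposition walk on $S_N$ restricted to the $k$-tuple orbit, i.e.\ the multislice with $k$ singleton colors and one class of size $N-k$ --- is a dead end as stated: by Salez's result that chain has log-Sobolev constant $\Theta(1/(N\log N))=\Theta(1/(n2^n))$, \emph{not} anything scaling with $k$, because a uniformly random transposition touches one of the $k$ distinguished elements only with probability $O(k/N)$, so the chain is effectively lazy with probability $1-O(k/2^n)$. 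A Diaconis--Saloff-Coste comparison can only transfer a lower bound on the reference chain's constant to the target (divided by congestion), so even a perfect polylog-overhead flow would give you $\rho=\wt\Omega(1/(n2^n))$, exponentially worse than the $\wt\Omega(1/(nk))$ you need. The paper's main technical work is precisely to circumvent this: it defines a ``uniform clique coloring'' chain in which every step recolors one of the $k$ active vertices, proves its log-Sobolev constant is $\Omega(1/(k\log N))$ directly by the Lee--Yau/Salez martingale method (an induction on $k$ via the chain rule of entropy), and only then compares to the circuit chain via Brodsky--Hoory's canonical paths.

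A second gap: even granting a good log-Sobolev bound for a clique-coloring-type chain, the Brodsky--Hoory comparison to the circuit chain costs a factor of $n^2$ in congestion, which lands you at mixing time $\wt O(n^3k\log(1/\varepsilon))$, not $\wt O(nk\log(1/\varepsilon))$. The paper recovers the linear-in-$n$ bound by a separate argument (restriction to ``generic'' states after a short burn-in of $\wt O(n)$ steps, followed by a product-chain decomposition and \cite{DSC96}'s tensorization of log-Sobolev constants), and it explicitly notes that its proof does \emph{not} yield $\alpha(P^{\mathsf{rev}})=\wt\Omega(1/(nk))$ for the unrestricted chain --- so your plan to prove the global MLSI $\rho(M)=\wt\Omega(1/(nk))$ directly for $M$ may be strictly harder than what is actually known. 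Your alternative suggestion of entropy tensorization over overlapping $3$-bit blocks is not obviously workable either, since the $3$-bit gates act on wires of the hypercube while the relevant entropy decomposition in the paper is over the $k$ coordinates of the tuple, not over the $n$ wires; you would need to supply the analog of the recursive conditioning structure that makes the martingale method go through.
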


We note here that for applications of approximate $k$-wise independent permutation distributions~$\calP$ in derandomization, one is generally concerned with the number of truly random ``seed'' bits needed to generate a draw from~$\calP$. See, for example~\cite{mohanty2020explicit}. By using techniques such as derandomized squaring (see~\cite{kaplan2009derandomized}), one can often reduce the seed length to $O(nk)$ for any construction. This is true for the results in our paper, and we don't discuss the seed length any further, as we are generally focused on the circuit complexity of our permutations.

\subsection{Proof overview}
We use the comparison method in a similar way as~\cite{BH08}. In particular, we bound the log-Sobolev constant of the natural Markov chain associated with the computation of a random reversible circuit, by comparing it to the log-Sobolev constant of the \emph{$k$-clique $2^n$-coloring} Markov chain. By working with the log-Sobolev constant rather than the spectral gap of this random walk as~\cite{BH08, he2024pseudorandom} do, we obtain an improved mixing time since the log-Sobolev constant gives a mixing time bound that depends doubly logarithmically on the smallest probability of the stationary distribution. In contrast, the spectral gap gives bounds that depend logarithmically on this quantity.

While it is generally more difficult to bound the log-Sobolev constant of a Markov chain, recent work of Salez~\cite{Sal20} has used the martingale method of Lee and Yau~\cite{lee1998logarithmic} to obtain sharp estimates for the log-Sobolev constant of a natural random walk on the multislice. Using this method, we estimate the log-Sobolev constant of a variant of $k$-clique $2^n$-coloring chain, which we call the \emph{uniform $k$-clique $2^n$-coloring} chain. The log-Sobolev constant for the standard $k$-clique $2^n$-coloring chain is then obtained via a simple application of the comparison method.

In more detail, our starting point is the work of Salez which bounds the log-Sobolev of the \emph{multislice}. The multislice corresponds to the random walk over the set of colorings of $2^n$ items, where each step of the walk swaps the colors of any two items chosen uniformly at random. The colorings are comprised of $k+1$ colors, where the first $k$ colors appear once and the last color appears in the remaining $2^n - k$ items. The first observation is that this random walk captures the $k$-wise independence of a random walk with transpositions. Unfortunately, the log-Sobolev constant of this walk is too small: $\pbra{n\cdot 2^n}^{-1}$. In contrast, we would expect a random set of transpositions to mix to a $k$-wise independent permutation within a time that is dependent on $k$.

The reason that the log-Sobolev constant of the multislice chain is independent of $k$ is because it applies a random transposition from the entire set of $\binom{2^n}{2}$ transpositions. In the case when $k$ is much smaller than $2^n$, a random transposition will most likely exchange the colors of two of the $2^n-k$ items that have color $k+1$. Thus, with high probability, roughly $1 - \frac{k}{2^n}$, the multislice chain will not move to a new state. To avoid this artificial slowdown, we study the uniform $k$-clique $2^n$-coloring chain, which requires that every step applies one transposition with an element that doesn't have color $k+1$. Equivalently, one may think of the uniform $k$-clique $2^n$-coloring chain as a random walk on the multislice that takes $\frac{2^n}{k}$ steps per time step and thus would hope that the log-Sobolev constant scales down by a factor of $\frac{k}{2^n}$. Indeed, we employ the martingale method and prove that the log-Sobolev constant of the uniform $k$-clique $2^n$-coloring chain is $\Omega\pbra{\frac{1}{nk}}$ as expected.

One can compute the log-Sobolev constant of the uniform $k$-clique $2^n$-coloring chain by using Salez's result as a black box and viewing the multislice chain as a lazy version of the uniform $k$-clique $2^n$-coloring chain. We instead present an alternative proof by adapting the martingale method used by Salez.

The next step is to transfer our log-Sobolev bound from the uniform $k$-clique $2^n$-coloring chain to the $k$-clique $2^n$-coloring chain, which has slightly different transition probabilities than its uniform counterpart. We give a randomized paths construction with only a constant amount of congestion. The comparison method implies that the log-Sobolev constant of the $k$-clique $2^n$-coloring chain is also $\Omega\pbra{\frac{1}{nk}}$.

Finally, we obtain an estimate for the log-Sobolev constant of the random reversible circuits Markov chain by employing the comparison with the $k$-clique $2^n$-coloring chain
from~\cite{BH08}. More specifically, Brodsky and Hoory give a randomized paths construction with a comparison constant of $\Theta(n^2)$. This concludes our $\Omega\pbra{\frac{1}{n^3k}}$ bound for the log-Sobolev constant of the reversible circuits Markov chain.

To improve our bound on the mixing time of the reversible circuits Markov chain, we use another argument from~\cite{BH08}. The observation is that after a short random walk of $\wt{O}(n)$ steps, the state of the reversible circuits Markov chain is very likely to be in a \emph{generic} state. Thus it suffices to bound the mixing time of the Markov chain when restricted to generic states. We do this by bounding its log-Sobolev constant, using the log-Sobolev inequality of the clique coloring chain, which we proved earlier. This allows us to bring down the mixing time of the reversible circuits Markov chain to $O(nk\cdot\mathrm{polylog}(n,k))$.

\section{Preliminaries}

\paragraph{Notation.} In this paper we will use the symbols $\gtrsim, \lesssim$ to compare two quantities in the asymptotic sense, in particular, these symbols hide constant factors. For example, $f(n)\lesssim g(n) \iff f(n) \leq O(g(n))$. When $x=(x_1,\dots,x_k)$ is a tuple, we use the notation $\ell\in x$ whenever $\ell = x_i$ for some $i \in [k]$ and otherwise, we write $\ell\not\in x$.

\begin{definition}[Tuples with distinct elements]
    \label{def:distinct-tuples}
    Let $S$ be a set. We define the set of \emph{$k$-tuples with distinct elements from $S$} as follows:
    $$\Theta_{k, S} \coloneq \cbra{(x_1,\dots,x_k)\in S^k:x_i\text{'s distinct}}.$$
    We frequently write $\Theta_{k, N}$ in the place of $\Theta_{k, [N]}$.
\end{definition}

We recall the definition of width-$2$ simple permutations from~\cite{BH08}. 

\begin{definition}[Width-$2$ simple permutations]
\label{def:width 2}
    The set of \emph{width-$2$ simple permutations} is the following set of permutations on $\{0, 1\}^n$
    $$\Sigma \coloneq \left\{f_{i, j_1, j_2, h} :
    \begin{array}{c}
    i, j_1, j_2 \in [n], i \neq j_1, j_2 \\
    h~\text{Boolean function on}~\{0, 1\}^2
    \end{array}
    \right\}.$$
    The permutation $f_{i, j_1, j_2, h}$ maps $(x_1, \dots, x_n)$ to $(x_1, \dots, x_{i-1}, x_i\oplus h(x_{j_1}, x_{j_2}), x_{i+1}, \dots, x_n)$.
\end{definition}
In words, a width-$2$ permutation chooses $3$ random indices from $[n]$: $i$ and $j_1, j_2$. It further samples a random Boolean function on $2$ bits. Then it XORs the value of $h(x_{j_1}, x_{j_2})$ on the $i^{th}$ bit of the input.

\subsection{Log-Sobolev constant and mixing time}

We recall some background on Markov chains from \cite{SC97}. Let $P$ be the transition matrix of an ergodic Markov chain over finite state space $V$, and let $\pi$ denote its stationary distribution. We identify a Markov chain with its transition matrix, so we will often say that $P$ is both the transition matrix for a Markov chain and also the Markov chain itself. We let $p^t_x$ denote the probability distribution of $P$, starting at state $x$, at timestep $t$. 

\begin{definition}[Mixing time]
    The $\varepsilon$-mixing time of an ergodic Markov chain $P$ is defined as:
    \begin{align*}
        \tau_\varepsilon(P) \coloneq \min \cbra{t \geq 0 : \max_{x \in V} \norm{p_x^t - \pi}_{\text{TV}}}.
    \end{align*}
    When the subscript is dropped, we mean $\tau(P)=\tau_{1/4}(P)$.
\end{definition}

Throughout this paper, we deal only with \textit{reversible} Markov chains. 
\begin{definition}[Reversible Markov chain]
    We say that a Markov chain $P$ is \emph{reversible} if for all $x,y \in V$, 
    \begin{align*}
        \pi(x)P(x,y) = \pi(y)P(y,x).
    \end{align*}
\end{definition}

One powerful way of bounding the mixing time of Markov chains is by functional inequalities using the Dirichlet form.

\begin{definition}[Dirichlet form]
    \label{def:dirichlet}
    For function $f:V \to \mathbb{R}_{\geq0}$, the \emph{Dirichlet form} of $f$ with respect to $P$ is
    $$\mathcal{E}_P(f, f) \coloneq \frac{1}{2}\sum_{x, y\in \Omega}\left(f(x) - f(y)\right)^2 \pi(x)P(x, y).$$
\end{definition}

Intuitively, the Dirichlet form measures the ``local variation'' of $f$ with respect to the (weighted) graph underlying a Markov chain $P$.
\begin{definition}[Entropy]
    For a function $f:V \to \mathbb{R}_{\geq0}$, we define its \emph{entropy}
    $$\ent_\pi[f] \coloneq \sum_{x \in V} \pi(x)f(x)\log \frac{f(x)}{\mathbb{E}_\pi[f]},$$
    where $\mathbb{E}_\pi[f] = \sum_{x\in V} \pi(x)f(x)$.
\end{definition}

The ratio of these two quantities defines the log-Sobolev constant of the Markov chain.

\begin{definition}[Log-Sobolev constant of Markov chain]
    The \textit{log-Sobolev constant} of $P$ is defined by
    $$\alpha(P) \coloneq \inf_{\substack{f \geq 0 \\ f~\text{non-constant}}} \frac{\mathcal{E}_P(\sqrt f, \sqrt f)}{\ent_\pi[f]}.$$
\end{definition}

The log-Sobolev constant of a Markov chain bounds the mixing time of the chain according to the following theorem. Note the doubly-logarithmic dependence on $1/\pi_{\mathrm{min}}$, which is the conceptual advantage of using log-Sobolev inequalities over a spectral gap analysis, whenever $\varepsilon$ is not exponentially small.

\begin{theorem}[\cite{DSC96}, Theorem 3.7]
    \label{thm:log-sob-to-mixing}
    Let $P$ be the transition matrix of a reversible Markov chain whose stationary distribution is $\pi$, and $\pi_{\min}$ to be the smallest stationary probability. For $\varepsilon \leq \frac{1}{e}$, the $\varepsilon$-mixing time is bounded by
    \begin{align*}
        \tau_\varepsilon(P) \lesssim \frac{1}{\alpha}\pbra{\log\log  \frac{1}{\pi_{\min}} + \log \frac{1}{\varepsilon}}.
    \end{align*}
\end{theorem}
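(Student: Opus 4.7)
The plan is to follow the classical Diaconis--Saloff-Coste route: convert the log-Sobolev inequality into exponential decay of relative entropy along the heat semigroup, bound the initial entropy by $\log(1/\pi_{\min})$, and then transfer to total variation distance via Pinsker's inequality.

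First, I would pass to the continuous-time chain with generator $L = -(I-P)$ and semigroup $H_t = e^{tL}$, which shares the same reversible stationary distribution $\pi$; a standard laziness comparison shows that mixing times differ only by a constant factor from the discrete-time chain $P$. Let $f_t = H_t(\mathbf{1}_x/\pi(x))$ be the density, with respect to $\pi$, of the chain started at $x$ at time $t$. A direct computation using reversibility and the fact that $\mathbb{E}_\pi[f_t] = 1$ is conserved gives
$$\frac{d}{dt}\ent_\pi[f_t] = -\mathcal{E}_P(f_t, \log f_t).$$
The elementary logarithmic-mean inequality $(a-b)(\log a - \log b) \geq 4(\sqrt{a}-\sqrt{b})^2$, applied termwise inside the bilinear Dirichlet form, yields $\mathcal{E}_P(f_t, \log f_t) \geq 4\mathcal{E}_P(\sqrt{f_t}, \sqrt{f_t})$; inserting the definition of the log-Sobolev constant then produces the differential inequality
$$\frac{d}{dt}\ent_\pi[f_t] \leq -4\alpha(P) \cdot \ent_\pi[f_t],$$
so Gronwall gives $\ent_\pi[f_t] \leq e^{-4\alpha t}\,\ent_\pi[f_0]$.

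Second, the initial entropy is easy to control: $\ent_\pi[f_0] = \log(1/\pi(x)) \leq \log(1/\pi_{\min})$. By Pinsker's inequality,
$$\|p_x^t - \pi\|_{\mathrm{TV}}^2 \leq \tfrac{1}{2}\ent_\pi[f_t] \leq \tfrac{1}{2}e^{-4\alpha t}\log\frac{1}{\pi_{\min}}.$$
Requiring the right-hand side to be at most $\varepsilon^2$ and solving for $t$ yields the advertised bound $t \gtrsim \frac{1}{\alpha}\bigl(\log\log\frac{1}{\pi_{\min}} + \log\frac{1}{\varepsilon}\bigr)$. The doubly-logarithmic factor comes precisely from the exponential decay of $\ent$ combined with the only $\log(1/\pi_{\min})$ size of the initial entropy; a spectral gap analysis would instead begin from an $L^2$ norm of size $1/\pi_{\min}$, producing a $\log(1/\pi_{\min})$ dependence.

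The main technical subtlety is the continuous-to-discrete passage: the entropy derivative identity above is most natural in continuous time, so applying it to the discrete iterates $P^t$ requires either working with a lazy version $P_{\mathrm{lazy}} = \tfrac{1}{2}(I+P)$ (which only costs constants in both $\alpha$ and the mixing time) or invoking a discrete hypercontractive argument in the style of Miclo. Either approach loses only universal constants, so the stated bound follows as claimed.
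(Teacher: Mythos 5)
Your argument is correct in substance, but it takes a genuinely different route from the one in Diaconis--Saloff-Coste, which is what the paper cites for this statement. You derive exponential decay of relative entropy from $\mathcal{E}(f,\log f)\ge 4\mathcal{E}(\sqrt f,\sqrt f)\ge 4\alpha\,\ent_\pi[f]$ and finish with Pinsker; this is the ``modified log-Sobolev'' route. The individual steps check out: the entropy-production identity $\tfrac{d}{dt}\ent_\pi[f_t]=-\mathcal{E}(f_t,\log f_t)$ is valid for reversible chains, the pointwise bound $(a-b)(\log a-\log b)\ge 4(\sqrt a-\sqrt b)^2$ is exactly the statement that the logarithmic mean is dominated by the arithmetic mean, $\ent_\pi[f_0]=\log(1/\pi(x))\le\log(1/\pi_{\min})$, and the closing arithmetic gives the stated $\tfrac1\alpha\bigl(\log\log\tfrac1{\pi_{\min}}+\log\tfrac1\varepsilon\bigr)$. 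The DSC proof instead uses hypercontractivity ($\|H_t\|_{2\to q}\le 1$ for $q\le 1+e^{4\alpha t}$) to drive the $L^2$ norm of the density down to $O(1)$ in time $\tfrac1{4\alpha}\log\log\tfrac1{\pi_{\min}}$, and then spectral-gap decay (via $\lambda\ge 2\alpha$) for the remaining $\log(1/\varepsilon)$; that version is what also yields the pointwise $\ell^\infty$ bound of Theorem~\ref{thm:pointwiseconvergence}, which your entropy/Pinsker argument does not recover. For the total-variation statement of Theorem~\ref{thm:log-sob-to-mixing} itself, your route is if anything cleaner, since it never invokes the spectral gap.

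One caveat you should state more carefully: the discrete-time passage. Working with $P_{\mathrm{lazy}}=\tfrac12(I+P)$ proves the bound for the lazy chain (with $\alpha$ halved), but $\tau_\varepsilon(P)$ and $\tau_\varepsilon(P_{\mathrm{lazy}})$ are \emph{not} within constant factors for general reversible ergodic $P$: the Dirichlet form, and hence $\alpha$, is blind to eigenvalues near $-1$, so a nearly periodic chain can have $\alpha$ bounded away from $0$ yet mix arbitrarily slowly in discrete time. The statement therefore implicitly requires some control of the smallest eigenvalue (or laziness); DSC phrase their Theorem 3.7 in continuous time for exactly this reason. This imprecision is inherited from the statement as quoted rather than introduced by you --- and every chain actually used in the paper has constant holding probability --- but ``only costs constants'' is not justified as written.
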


In fact, the log-Sobolev constant bounds the $\ell^\infty$ mixing time, which gives \textit{pointwise} distance bounds.

\begin{theorem}[\cite{DSC96}, Corollary 3.8]\label{thm:pointwiseconvergence}
     For reversible $P$, and for all $x,y \in V$
    $$\left|p_x^t(y) - \pi(y)\right| \leq \varepsilon \pi(y)$$
    when $t \gtrsim \frac{1}{\alpha}\pbra{\log \log \frac{1}{\pi_\text{min}} + \log \frac{1}{\varepsilon}}$.
\end{theorem}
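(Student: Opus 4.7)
The plan is to deduce this pointwise convergence theorem from Gross's classical equivalence between log-Sobolev inequalities and hypercontractivity of the associated heat semigroup. Since $P$ is reversible with stationary distribution $\pi$ and log-Sobolev constant $\alpha$, the continuous-time semigroup $H_t \coloneq e^{-t(I-P)}$ is reversible on $L^2(\pi)$ with the same log-Sobolev constant, and by standard discrete-vs-continuous-time comparisons it suffices to prove the bound for $H_t$ and then translate it back to $P^t$. The advantage of working with $H_t$ is that Gross's theorem gives $\|H_t f\|_{q(t)} \leq \|f\|_p$ whenever $q(t) - 1 \leq e^{4\alpha t}(p-1)$, which is exactly the tool producing doubly-logarithmic time dependence.

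First I would reduce the pointwise statement to an $L^2 \to L^\infty$ bound via a factorization trick. Define the density kernel $k_t(x,y) \coloneq P^t(x,y)/\pi(y)$, which is symmetric in $x,y$ by reversibility. The semigroup property together with $\sum_z (k_{t}(x,z) - 1)\pi(z) = 0$ yields the identity
\[
k_{2t}(x,y) - 1 \;=\; \bigl\langle k_t(x,\cdot) - 1,\; k_t(y,\cdot) - 1 \bigr\rangle_\pi,
\]
so Cauchy-Schwarz gives $|k_{2t}(x,y) - 1| \leq \|k_t(x,\cdot) - 1\|_2 \cdot \|k_t(y,\cdot) - 1\|_2$. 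Thus it suffices to prove $\|k_t(x,\cdot) - 1\|_2 \leq \sqrt{\varepsilon}$ uniformly in $x$ at time $t \gtrsim \tfrac{1}{\alpha}(\log\log(1/\pi_{\min}) + \log(1/\varepsilon))$. Noting that $k_t(x,\cdot) - 1 = H_t(\delta_x/\pi(x) - 1)$, and that $\|\delta_x/\pi(x) - 1\|_2^2 = 1/\pi(x) - 1 \leq 1/\pi_{\min}$, the task becomes showing that $H_t$ contracts a mean-zero function of $L^2$-size $1/\sqrt{\pi_{\min}}$ down to $\sqrt{\varepsilon}$ in the stated time.

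The key step, where the doubly-logarithmic dependence comes from, is to split the contraction into two phases. In the first phase, apply hypercontractivity iteratively along the geometric sequence $p = 2 \to 4 \to 8 \to \cdots \to 2^m$: each doubling of the exponent costs only $O(1/\alpha)$ time, and once $2^m \gtrsim \log(1/\pi_{\min})$ the $L^{2^m}(\pi)$ norm and the $L^\infty(\pi)$ norm of any function differ by at most a constant factor. Reversing the direction and applying this duality to our mean-zero test function shrinks it from $L^2$-norm $1/\sqrt{\pi_{\min}}$ to $O(1)$ in total time $O(\tfrac{1}{\alpha} \log\log(1/\pi_{\min}))$. In the second phase, use the Poincar\'e inequality $\|H_s f\|_2 \leq e^{-\alpha s}\|f\|_2$ (which follows from log-Sobolev by linearizing entropy near the constant function) to contract $O(1)$ down to $\sqrt{\varepsilon}$ in additional time $O(\tfrac{1}{\alpha}\log(1/\varepsilon))$.

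The main obstacle is the first phase: a naive $L^2$ bound alone would only yield a single logarithm $\log(1/\pi_{\min})/\alpha$, since $\|\delta_x/\pi(x) - 1\|_2$ is already as large as $\pi_{\min}^{-1/2}$. The doubly-logarithmic improvement genuinely requires climbing the ladder of $L^p$ spaces, because each hypercontractive step only costs additive time $O(1/\alpha)$ yet multiplies the exponent; the number of rungs needed to approximate $L^\infty$ is $\log\log(1/\pi_{\min})$ rather than $\log(1/\pi_{\min})$. Carefully bookkeeping the normalizing constants through this iteration (to keep the operator norms bounded by $O(1)$) is the delicate computational heart of the argument, but the structure is clean and standard once the factorization and hypercontractivity tools are in place.
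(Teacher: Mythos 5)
This theorem is imported from \cite{DSC96} and the paper supplies no proof of its own, so the benchmark is the cited source's argument: your reconstruction --- hypercontractivity of the semigroup from the log-Sobolev inequality, the reversibility-plus-Cauchy--Schwarz factorization $|k_{2t}(x,y)-1|\le \|k_t(x,\cdot)-1\|_2\,\|k_t(y,\cdot)-1\|_2$ to reduce to an $L^2$ bound, and Poincar\'e decay for the $\log(1/\varepsilon)$ term --- is exactly the standard Diaconis--Saloff-Coste proof, with your iterated $L^2\to L^4\to\cdots\to L^{2^m}$ ladder being a discretization of their single hypercontractive estimate with exponent $q(t)=1+e^{4\alpha t}$ (applied in the dual direction to the initial density $\delta_x/\pi(x)$, whose $L^{q^*}$ norm is $\pi(x)^{-1/q}=O(1)$ once $q\gtrsim\log(1/\pi_{\min})$). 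The one step you should not wave away is the transfer from the continuous-time semigroup $H_t$ back to the discrete-time kernel $p_x^t$: for a general reversible $P$ the discrete-time statement is simply false (the periodic two-state chain has $\alpha>0$ yet $p_x^t$ never converges), so this step genuinely requires laziness or nonnegative spectrum, or running the argument through $P^2=P^{*}P$, which is how the discrete-time versions in \cite{DSC96} are obtained and is harmless for the lazy chains used in this paper.
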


\subsection{The comparison method}

We bound the log-Sobolev constant of a reversible circuits Markov chain by repeated application of the comparison method~\cite{DSC93,wilmer2009markov} which we introduce below. The comparison method is used to estimate the Dirichlet form of a \emph{target} Markov chain with transition matrix $P$ by relating it to the Dirichlet form of a \emph{reference} Markov chain with transition matrix $\wt{P}$, for which we have previously-known estimates. This relation between Dirichlet forms can be trivially extended to an inequality between log-Sobolev constants when $\wt{P}$ and $P$ are over the same state space $V$ and have the same stationary distribution $\pi$.

The comparison is achieved by ``simulating'' the transition probabilities of the $\wt{P}$ Markov chain using paths from $P$. Formally, for each $(x,y)\in V^2$ we assign a random path 
\begin{align*}
    \bm{\Delta}(x,y)=\pbra{(x,\bm{u}_1),(\bm{u}_1,\bm{u}_2),(\bm{u}_2,\bm{u}_3),\dots,(\bm{u}_{\bm{\ell}},y)},
\end{align*}
where the $\bm{u}_i$'s are random elements of $V$ that satisfy  $P(x, \bm{u}_1), P(\bm{u}_{\bm{\ell}}, y) >0$ and $P(\bm{u}_i, \bm{u}_{i+1}) > 0$. The quantity $\bm{\ell}$ is a random non-negative integer equal to the length of the path $|\bm{\Delta}(x, y)|$. The congestion of these paths (which is captured by the comparison constant $A(\bm{\Delta})$) provides a lower bound of $\mathcal{E}$ with respect to $\wt{\mathcal{E}}$ as shown formally in~\Cref{thm:comparison}.

Without loss of generality, we assume that the paths $\bm{\Delta}(x, y)$ are simple, since one can remove all loops without affecting the endpoints $x, y$ of a path and without increasing the congestion.

\begin{lemma}[\cite{wilmer2009markov}, Corollary 13.23]
    \label{thm:comparison}
    Let $\wt{P}$ and $P$ be transition matrices for two ergodic Markov chains on the same state space $V$. Assume that for each $(x,y)\in V^2$ there exists a random path 
    \begin{align*}
        \bm{\Delta}(x,y)=\pbra{(x,\bm{u}_1),(\bm{u}_1,\bm{u}_2),(\bm{u}_2,\bm{u}_3),\dots,(\bm{u}_{\bm\ell},y)}.
    \end{align*}
    Then we have for any $f:V\to\R$ that
    \begin{align*}
        \wt{\mathcal{E}}(f, f) \leq A(\bm{\Delta})\cdot \mathcal{E}(f, f)
    \end{align*}
    where the comparison constant of $\bm{\Delta}$ is defined to be
    \begin{align*}
        A(\bm{\Delta}) \coloneq \max_{\substack{(a,b)\in V^2 \\\wt{P}(a,b)>0}} \cbra{\frac{1}{\pi(x)P(a,b)}\sum_{(x,y)\in V^2}\Ex_{\bm{\Delta}}\sbra{\mathbf{1}_{(a, b) \in \bm{\Delta}(x,y)}\cdot |\bm{\Delta}(x,y)|}\cdot  \wt{\pi}(x)\cdot \wt{P}(x, y)}.
    \end{align*}
    Here $\pi$ and $\wt{\pi}$ are the (unique) stationary distributions for $P$ and $\wt{P}$, respectively, and $\mathbf{1}_{(a, b) \in Q}$ is the indicator variable which captures whether the edge $(a,b)$ appears in the sequence $Q$.
\end{lemma}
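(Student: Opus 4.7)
The plan is to adapt the classical path-counting argument of Diaconis--Stroock to the randomized paths setting. The key identity is that the difference $f(x) - f(y)$ across an edge of the reference chain $\wt{P}$ telescopes into a sum of successive differences $f(a) - f(b)$ along edges $(a,b)$ of the random path $\bm{\Delta}(x,y)$ in the target chain $P$; squaring and applying Cauchy--Schwarz then yields pointwise (in the realization of $\bm{\Delta}$) that
\begin{align*}
(f(x) - f(y))^2 \leq |\bm{\Delta}(x,y)| \cdot \sum_{(a,b)\in V^2} \mathbf{1}_{(a,b)\in \bm{\Delta}(x,y)}\,(f(a)-f(b))^2.
\end{align*}
Since the paths only traverse edges of $P$ with positive probability, each nonzero indicator corresponds to an edge with $P(a,b) > 0$.

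Next, I would expand the Dirichlet form of $\wt{P}$ as
\begin{align*}
\wt{\mathcal{E}}(f,f) = \frac{1}{2} \sum_{(x,y)\in V^2} (f(x)-f(y))^2\, \wt{\pi}(x)\wt{P}(x,y),
\end{align*}
substitute the Cauchy--Schwarz bound above, and take the expectation over $\bm{\Delta}$. Interchanging the (finite) sums over $(x,y)$ and $(a,b)$ to group terms by the target edge used, I obtain
\begin{align*}
\wt{\mathcal{E}}(f,f) \leq \frac{1}{2}\sum_{(a,b)\in V^2} (f(a)-f(b))^2 \sum_{(x,y)\in V^2} \Ex_{\bm{\Delta}}\sbra{\mathbf{1}_{(a,b)\in\bm{\Delta}(x,y)}\cdot|\bm{\Delta}(x,y)|}\cdot \wt{\pi}(x)\wt{P}(x,y).
\end{align*}
I would then multiply and divide the inner sum by $\pi(a)P(a,b) > 0$ so as to recognize the coefficient of $(f(a)-f(b))^2\,\pi(a)P(a,b)$ on the right-hand side. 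By the definition of $A(\bm{\Delta})$ (taking the maximum over all edges $(a,b)$ that can appear in some path), this coefficient is at most $A(\bm{\Delta})$, and pulling it outside the sum leaves exactly $\mathcal{E}(f,f)$.

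The main subtlety I expect is the careful bookkeeping of the random path: the factor $|\bm{\Delta}(x,y)|$ must appear from Cauchy--Schwarz and is precisely what enters the comparison constant, and the interchange of $\Ex_{\bm{\Delta}}$ with the double sum needs to be justified (trivial here since everything is finite and non-negative). Restricting to simple paths, as the excerpt permits, ensures that each edge contributes at most once per realization, eliminating a potential source of over-counting; the remaining slack is absorbed into the maximum defining $A(\bm{\Delta})$. Everything else is essentially algebraic rearrangement.
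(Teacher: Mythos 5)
Your argument is correct and is exactly the standard proof of the randomized-paths comparison theorem (the paper does not reprove this; it cites Corollary 13.23 of \cite{wilmer2009markov}, whose proof is the same telescoping $+$ Cauchy--Schwarz $+$ sum-interchange argument you give, with simplicity of the paths guaranteeing each edge is counted once per realization). One remark: your normalization by $\pi(a)P(a,b)$ when grouping terms by the target edge is the right one --- the $\pi(x)$ in the denominator of the displayed definition of $A(\bm{\Delta})$ in the statement is a typo for $\pi(a)$, and likewise the maximum should range over edges with $P(a,b)>0$ (edges actually traversed by the paths), as you correctly assume.
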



\section{The Markov chains}

We now set up the Markov chains we use in the proof of \Cref{thm:main}. Throughout this section (and the rest of the paper) fix positive integers $n$, $k$, and $N$ (which will typically be equal to $2^n$). Our Markov chains all have domains isomorphic to $\Theta_{k,U}$ for some set $U$:

\begin{definition}[Reversible circuit Markov chain]
\label{def:rev-circuit-chain}
    The chain $\cbra{\bm{X}^{\mathsf{rev}}_t}_{t\geq 0}$ on the state space of $k$ distinct $n$-bit strings
    is given by the following distribution on $\bm{X}_{t+1}^{\mathsf{rev}}|\bm{X}_{t}^{\mathsf{rev}}$. Given the current state $x=(x_1,\dots,x_k)$, to draw the next state $\bm{X}_{t+1}=(\bm{y}_1,\dots,\bm{y}_k)$, draw a uniformly random width-2 permutation $\bm\sigma \in \Sigma$
    and set 
    \begin{align*}
       (\bm{y}_1,\dots,\bm{y}_k)=(\bm\sigma x_1,\dots,\bm\sigma x_k).
    \end{align*}
    Let $P^{\mathsf{rev}}_{k, n}$ be the transition matrix of this Markov chain.
\end{definition}

This Markov chain exactly captures the evolution of $k$ inputs to a random reversible circuit whose gates are uniformly drawn from the set of width-$2$ permutations $\Sigma$. Thus the statement of~\Cref{thm:main} that a random reversible circuit with $s$ width-$2$ gates is an $\varepsilon$-approximate $k$-wise independent permutation is implied by the statement that $\tau_\varepsilon\pbra{P^{\mathsf{rev}}_{k, n}} \leq s$. We typically write $P^{\textsf{rev}}$ and omit the parameters $k$ and $n$ whenever they are clear from the context or not important.

Following~\cite{BH08}, we prove that this Markov chain mixes fast by comparing it to the \emph{$k$-clique $2^n$-coloring} Markov chain. In this paper we deal with two clique coloring chains, thus we will refer to this chain as the \emph{standard clique coloring}, or simply the \emph{clique coloring} chain. (Note that this chain is slightly different than the )

\begin{definition}[Standard $k$-clique $N$-coloring Markov chain]
    \label{def:clique-coloring}
   Let $N$ be the number of colors and $k$ be the number of clique vertices. The \emph{$k$-clique $N$-coloring} chain $\cbra{\bm{X}^{\cc}_t}_{t\geq 0}$ on the set of colorings $\Theta_{k, N}$ is given by the following distribution on $\bm{X}_{t+1}^{\cc}|\bm{X}_{t}^{\cc}$. To sample $\bm{X}_{t+1}^{\cc} =(\bm{y}_1,\dots,\bm{y}_k)$ given the current state $\bm{X}_{t}^{\cc}=x=(x_1,\dots,x_{k})$, uniformly sample $\bm{i}\in [k]$ and $\bm{\ell}\in \{\ell\in[N]:\ell \not\in x\} \cup \{x_{\bm{i}}\}$ and set
    \begin{align*}
        \bm{y}_{j}=& \begin{cases}
            \bm{\ell} & j = \bm{i} \\
            x_j & j \neq \bm{i}
        \end{cases}.
    \end{align*}
    Let $P^\cc_{k, N}$ be the transition matrix for this Markov chain.
\end{definition}

In other words, the clique coloring chain samples a uniformly random coloring of the $k$-clique with $N$ colors, by randomly choosing a vertex and randomly assigning it one of the $(N-k+1)$ available colors (including its current color).

We directly bound the log-Sobolev constant of a related Markov chain, which we call the \emph{uniform clique coloring} chain. 

\begin{definition}[Uniform $k$-clique $N$-coloring Markov chain]
    \label{def:uniform-clique-coloring}
   Let $N$ be the number of colors and $k$ be the number of clique vertices. The \emph{uniform $k$-clique $N$-coloring} chain $\cbra{\bm{X}^{\ucc}_t}_{t\geq 0}$ on the set of colorings $\Theta_{k, N}$ is given by the following distribution on $\bm{X}_{t+1}^{\ucc}|\bm{X}_{t}^{\ucc}$. To sample $\bm{X}_{t+1}^{\ucc} =(\bm{y}_1,\dots,\bm{y}_k)$ given the current state $\bm{X}_{t}^{\ucc}=x=(x_1,\dots,x_{k})$ uniformly sample $\bm{i}\in [k]$ and $\bm{\ell}\in [N]$ and set
    \begin{align*}
        \bm{y}_{j}=& \begin{cases}
            \bm{\ell} & j = \bm{i} \\
            x_{\bm{i}} & \bm{\ell} = x_j \\
            x_j & \text{otherwise} \\
        \end{cases}.
    \end{align*}
    Let $P^{\ucc}_{k, N}$ be the transition matrix for this Markov chain.
\end{definition}

We call this the uniform clique coloring chain, since at every step a random vertex $\bm{i}$ is re-colored with a uniformly random color from the entire set $[N]$. If this color is already taken by another vertex $j$, the two vertices swap colors. This additional symmetry allows us to obtain a bound on the log-Sobolev constant of this chain by adapting the martingale method of Lee and Yau~\cite{lee1998logarithmic}. Moreover, it is not hard to relate the log-Sobolev constants of the uniform and standard clique coloring chains using the comparison method.

With all of our Markov chains defined, we now state the sequence of inequalities that will allow us to conclude \Cref{thm:main}, deferring the proofs of the auxiliary results to later sections.

\begin{theorem}\label{thm:log-sobolev-rev-circuits}
    Let $P^{\mathsf{rev}}_{k, n}$ be the transition matrix corresponding to the random walk from \Cref{def:rev-circuit-chain}. Then
    \begin{align*}
        \alpha(P^{\mathsf{rev}}_{k, n})\geq \Omega\pbra{\frac{1}{n^3k}}.
    \end{align*}
\end{theorem}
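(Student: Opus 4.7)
The plan is to prove \Cref{thm:log-sobolev-rev-circuits} by chaining three applications of the comparison method, following the roadmap sketched in the proof overview. The idea is to start from a log-Sobolev bound for the \emph{uniform} $k$-clique $N$-coloring chain, transfer it to the \emph{standard} $k$-clique $N$-coloring chain, and then transfer once more to the reversible circuits chain via the Brodsky--Hoory paths construction.

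The first and main step is to establish $\alpha(P^{\ucc}_{k,2^n})\gtrsim \frac{1}{nk}$. I would adapt the martingale method of Lee and Yau, in the form recently used by Salez for the multislice, directly to the $\ucc$ chain. The martingale decomposition sets up a filtration in which one conditions on the first $k-1$ colored coordinates and analyzes the conditional one-coordinate chain on the last vertex, which under the uniform recoloring rule is essentially the trivial random-sample-from-$[N]$ chain and therefore has log-Sobolev constant $\Omega(1/n)$. Crucially, because every color in $[N]$ is equally likely to be proposed, the conditional chains do not suffer the $\Theta(k/N)$ slowdown that makes $\alpha(P^{\ms})\asymp \frac{1}{n\cdot 2^n}$; summing the recursion across $k$ coordinates yields the $\Omega(1/(nk))$ bound.

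The second step is to transfer from $P^{\ucc}$ to $P^{\cc}$ by a direct application of \Cref{thm:comparison}. The two chains live on the same state space $\Theta_{k,2^n}$ and have the same uniform stationary distribution. A single $P^{\cc}$ transition is a recoloring of one of the $k$ distinguished vertices with an unused color, which corresponds to a $P^{\ucc}$ transition of exactly the same kind up to an $O(1)$ factor in the transition probability. Routing each $P^{\cc}$ edge through the corresponding $P^{\ucc}$ edge yields a randomized path system of length $1$ with $O(1)$ congestion, so $\alpha(P^{\cc}_{k,2^n})\gtrsim \frac{1}{nk}$.

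The third step invokes the Brodsky--Hoory randomized-paths comparison mentioned in the overview, which simulates a single $P^{\cc}_{k,2^n}$ transition (swapping a distinguished vertex's $n$-bit label with an unused label) by a short sequence of width-$2$ gate applications from $\Sigma$, with comparison constant $A(\bm{\Delta})=\Theta(n^2)$. Plugging into \Cref{thm:comparison} gives
\[
\alpha(P^{\mathsf{rev}}_{k,n}) \;\gtrsim\; \frac{1}{n^2}\cdot \alpha(P^{\cc}_{k,2^n}) \;\gtrsim\; \frac{1}{n^3 k},
\]
which is the desired bound. The main obstacle is clearly the first step: the martingale recursion must be set up so that each conditional chain retains enough symmetry to have log-Sobolev constant $\Omega(1/n)$, and the entropy tensorization at each level must be carried out without losing the $N/k$ speedup that distinguishes $P^{\ucc}$ from the multislice chain. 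The latter two comparison steps are essentially bookkeeping once one has a good paths construction of constant and $O(n^2)$ congestion, respectively.
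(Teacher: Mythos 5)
Your overall architecture is exactly the paper's: a martingale/Lee--Yau bound of $\Omega(1/(nk))$ for the uniform clique-coloring chain, a comparison to the standard clique-coloring chain, and the Brodsky--Hoory $O(n^2)$-congestion comparison to the circuit chain. Steps 1 and 3 match the paper's \Cref{lem:log-sobolev-uniform-clique-coloring} and \Cref{cor:circuits to cc} in substance; step 1 is only a sketch, but it is the right sketch (the paper conditions on one vertex's color at a time, bounds the conditional pieces by $\alpha(P^{\ucc}_{k-1,N-1})$ and the marginal pieces by the complete graph on $[N]$, and solves the recurrence $\alpha(P^{\ucc}_{k,N})^{-1}\leq \frac{N}{N-1}\alpha(P^{\ucc}_{k-1,N-1})^{-1}+3\log N$ by induction).

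The gap is in your second step. First, the routing direction is backwards: to conclude $\alpha(P^{\cc})\gtrsim\alpha(P^{\ucc})$ from \Cref{thm:comparison} you must route each edge of the \emph{reference} chain $P^{\ucc}$ through a path in the \emph{target} chain $P^{\cc}$; routing $P^{\cc}$ edges through $P^{\ucc}$ edges yields the reverse inequality $\alpha(P^{\ucc})\gtrsim\alpha(P^{\cc})$. Second, and more substantively, the edge sets are not nested in the direction you need: $P^{\ucc}$ has \emph{swap} transitions (when the proposed color $\ell$ equals $x_j$ for some $j\neq i$, vertices $i$ and $j$ exchange colors), and these are not edges of $P^{\cc}$, which only ever recolors a vertex with a currently unused color. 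So length-$1$ paths with $O(1)$ congestion cannot exist for those edges, and there is also no pointwise domination $P^{\cc}(x,y)\gtrsim P^{\ucc}(x,y)$ to fall back on. The paper's \Cref{lem:compare-clique-colorings} handles the swap edges with a randomized length-$3$ path through a uniformly random unused color $\bm{\ell'}$ (recolor $i$ to $\bm{\ell'}$, move $j$ to $x_i$, then move $i$ to $x_j$), and the congestion computation --- a fixed $P^{\cc}$ edge determines the $3$-path up to $k-1$ choices, each realized with probability $1/(N-k)$ --- gives $A(\bm{\Delta})\leq 19$. Without this construction your middle inequality is unsupported; with it, the rest of your chain of inequalities goes through exactly as in the paper.
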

\begin{proof}
We will show the following sequence of inequalities (recall that $\gtrsim$ hides constant factors):
\begin{align*}
    \alpha(P^{\mathsf{rev}}_{k, n})
    \underset{\text{\Cref{cor:circuits to cc}}}{\gtrsim}& {\frac{1}{n^2} }\cdot \alpha(P^\cc_{k, 2^n})
    \underset{\text{\Cref{lem:compare-clique-colorings}}} {\gtrsim}
    {\frac{1}{n^2} }\cdot \alpha(P^\ucc_{k, 2^n}) 
    \underset{\text{\Cref{lem:log-sobolev-uniform-clique-coloring}}}{\gtrsim}
    {\frac{1}{ n^3k}}.\qedhere
\end{align*}
\end{proof}

\Cref{thm:log-sobolev-rev-circuits} immediately gives a mixing time of $\wt{O}(n^3k \cdot \log(1/\epsilon))$ for the reversible circuits chain by \Cref{thm:log-sob-to-mixing}; in \Cref{sec:generic states} we improve the mixing time to $\wt{O}(nk\cdot \log(1/\epsilon))$ by applying ideas of \cite{BH08}, thus proving \Cref{thm:main}.

It may then seem that \Cref{thm:log-sobolev-rev-circuits} is strictly weaker than \Cref{thm:main}. However, the proof of \Cref{thm:main} does not yield a good log-Sobolev inequality for the reversible circuits Markov chain. Thus we cannot use that proof to conclude results about pointwise convergence as we can from log-Sobolev bounds using \Cref{thm:pointwiseconvergence}, such as the following result:

\begin{corollary}\label{cor:pointwiseconvergence}
    Let $p^t_x$ be the distribution over $V$ after $t \gtrsim n^3k\pbra{\log nk + \log \frac{1}{\varepsilon}}$ steps of $P^{\mathsf{rev}}_{k,n}$. For all $x,y, \in V$
    \[
        \frac{1-\epsilon}{2^n(2^n -1)\cdots(2^n - k+1)} \leq \Pr[p^t_x = y] \leq \frac{1 + \epsilon}{2^n(2^n -1)\cdots(2^n - k+1)}.
    \]
\end{corollary}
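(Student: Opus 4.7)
The plan is to combine \Cref{thm:log-sobolev-rev-circuits}, which supplies $\alpha(P^{\mathsf{rev}}_{k,n})\gtrsim 1/(n^3 k)$, with \Cref{thm:pointwiseconvergence}, which converts any log-Sobolev estimate into an $\ell^\infty$ pointwise convergence bound of precisely the form $|p_x^t(y)-\pi(y)|\leq \varepsilon\,\pi(y)$. The statement of the corollary is just the rearrangement of this inequality once we identify the uniform stationary distribution on $\Theta_{k,2^n}$ and take the right threshold on $t$.

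First I would verify that the stationary distribution $\pi$ of $P^{\mathsf{rev}}_{k,n}$ is uniform on $\Theta_{k,2^n}$. Each width-$2$ gate $f_{i,j_1,j_2,h}$ is an involution, so the uniform distribution on $\Sigma$ is invariant under inversion; hence $P^{\mathsf{rev}}_{k,n}$ is a symmetric matrix (and in particular doubly stochastic), and combined with the standard fact that $\Sigma$ generates a group acting transitively on $\Theta_{k,2^n}$, we conclude that $\pi$ is the uniform distribution. In particular, $\pi(y)=1/\prod_{i=0}^{k-1}(2^n-i)$ for every $y\in\Theta_{k,2^n}$, and the trivial bound $\pi_{\min}\geq 2^{-nk}$ gives $\log\log(1/\pi_{\min})\leq \log(nk)+O(1)$.

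Plugging $\alpha\gtrsim 1/(n^3 k)$ and this $\pi_{\min}$ estimate into \Cref{thm:pointwiseconvergence} shows that once
\[
    t\gtrsim \frac{1}{\alpha}\pbra{\log\log\tfrac{1}{\pi_{\min}}+\log\tfrac{1}{\varepsilon}}\gtrsim n^3 k\cdot \pbra{\log(nk)+\log\tfrac{1}{\varepsilon}},
\]
we have $|p_x^t(y)-\pi(y)|\leq \varepsilon\,\pi(y)$ for every $x,y\in V$. Substituting the value $\pi(y)=1/(2^n(2^n-1)\cdots(2^n-k+1))$ yields the two-sided bound in the statement.

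There is no real obstacle: the corollary is essentially the specialization of \Cref{thm:pointwiseconvergence} to the random reversible circuits chain, using the log-Sobolev bound of \Cref{thm:log-sobolev-rev-circuits}. The only detail that warrants a moment of care is checking that $\log\log(1/\pi_{\min})$ contributes only $\log(nk)$ (rather than something like $nk$) to the mixing-time expression, which is exactly where the conceptual advantage of log-Sobolev over a spectral-gap analysis manifests, and it is immediate from $\pi_{\min}\geq 2^{-nk}$.
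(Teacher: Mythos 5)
Your proposal is correct and is exactly the argument the paper intends: the corollary is the direct specialization of \Cref{thm:pointwiseconvergence} to $P^{\mathsf{rev}}_{k,n}$ using the log-Sobolev bound $\alpha \gtrsim 1/(n^3k)$ from \Cref{thm:log-sobolev-rev-circuits}, together with the observation that $\pi$ is uniform on $\Theta_{k,2^n}$ so that $\log\log(1/\pi_{\min}) \lesssim \log(nk)$. Your additional check that each width-$2$ gate is an involution (hence $P^{\mathsf{rev}}$ is symmetric and reversible with uniform stationary distribution) is a worthwhile detail the paper leaves implicit.
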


\section{The Log-Sobolev Constant of the Uniform Clique Coloring Chain}

The goal of this section is to lower bound the log-Sobolev constant of the uniform clique coloring Markov chain.

Recall that the uniform $k$-clique $N$-coloring Markov chain has state space $\Theta_{k, N}$ of size $N(N-1)\dots (N-k+1)$. Given some $x = (x_1, \dots, x_k) \in \Theta_{k, N}$, the action of choosing vertex $i\in[k]$ and coloring it with color $\ell\in[N]$ (where this color can already exist in the clique, as per~\Cref{def:uniform-clique-coloring}) will be denoted by $x^{i, \ell}$. Namely
$$x^{i, \ell} \coloneq
\begin{cases}
    (\dots, x_{i-1}, \ell, x_{i+1}, \dots) &\text{ if }\ell \not\in x\\
    (\dots, x_{j-1}, x_i, x_{j+1} \dots, x_{i-1}, x_j, x_{i+1}, \dots) &\text{ if } \ell = x_j.
\end{cases}$$

Let $f:\Theta_{k, N} \to \mathbb{R}$ be a function on the state space of this chain. Since the stationary distribution is the uniform, the expectation of $f$ over its state space is
\begin{align*}
    \Ex_{\Theta_{k, N}}[f] \coloneq \frac{1}{|\Theta_{k, N}|} \sum_{x \in \Theta_{k, N}} f(x).
\end{align*}
Moreover, the Dirichlet form of this chain can be written as
\begin{align*}
    \mathcal{E}_{P^{\ucc}_{k, N}}(\sqrt{f}, \sqrt{f}) &= \frac{1}{2}\Ex_{x\in \Theta_{k, N}}\sbra{ \Ex_{i \in [k]} \sbra{\Ex_{\ell \in [N]} \sbra{\pbra{\sqrt{f(x^{i, \ell})} - \sqrt{f(x)}}^2}}}\\
    &= \frac{1}{2kN\cdot |\Theta_{k, N}|}\sum_{x\in \Theta_{k, N}} \sum_{i \in [k]} \sum_{\ell \in [N]} \left(\sqrt{f(x^{i, \ell})} - \sqrt{f(x)}\right)^2.
\end{align*}

With this notation in mind, we now prove that this Markov chain has a large log-Sobolev constant. 
\begin{lemma}
    \label{lem:log-sobolev-uniform-clique-coloring}
    The log-Sobolev constant of the uniform $k$-clique $N$-coloring Markov chain satisfies
    $$\alpha(P^{\ucc}_{k, N}) \geq \frac{1}{12k \log N}$$
    when $k\leq N/2$.
\end{lemma}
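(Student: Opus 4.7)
The plan is to prove the bound by induction on $k$, adapting Salez's~\cite{Sal20} version of the Lee--Yau martingale method~\cite{lee1998logarithmic} to this asymmetric variant of the multislice walk. The core idea is the chain rule for entropy with respect to the last coordinate, which splits $\ent_\pi[f]$ into a conditional ``inner'' part handled by induction and a marginal ``outer'' part handled by the base case, with each piece controlled by the Dirichlet form of the full chain.

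\textbf{Base case ($k=1$).} Here the chain has $P^{\ucc}_{1,N}(x,y)=1/N$ for all $x,y$, i.e., it is the complete-graph walk on $[N]$ with self-loops and uniform stationary distribution. A short direct calculation shows that $\mathcal{E}_{P^{\ucc}_{1,N}}(\sqrt f,\sqrt f)=\var_\pi[\sqrt f]$, and the log-Sobolev constant of this chain is $\Theta(1/\log N)$, which is well within the required $1/(12\log N)$ bound.

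\textbf{Inductive step.} Assume the bound for $(k-1, N-1)$. For $f : \Theta_{k,N} \to \mathbb{R}_{\geq 0}$, write $f^{(j)}(y) := f(y,j)$ for $y \in \Theta_{k-1,[N]\setminus\{j\}}$, let $\pi^{(j)}$ be the conditional law of $(x_1,\ldots,x_{k-1})$ given $x_k=j$ (uniform on $\Theta_{k-1,[N]\setminus\{j\}}$), let $\pi_k$ be the marginal on $x_k$ (uniform on $[N]$), and set $\bar f(j) := \Ex_{\pi^{(j)}}[f^{(j)}]$. Then
\begin{align*}
    \ent_\pi[f] \;=\; \ent_{\pi_k}[\bar f] \;+\; \Ex_{j\sim\pi_k}\sbra{\ent_{\pi^{(j)}}[f^{(j)}]}.
\end{align*}
The first step is to bound the inner term: for each fixed $j$ the restriction of $P^{\ucc}_{k,N}$ to transitions preserving $x_k = j$ (i.e., those picking $i \in [k-1]$ and $\ell \in [N] \setminus \{j\}$) is, up to a rescaling of order $\tfrac{(k-1)(N-1)}{kN^2}$, exactly the transition matrix of $P^{\ucc}_{k-1,N-1}$ acting on $f^{(j)}$. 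The induction hypothesis combined with summing these conditional Dirichlet forms over $j$ yields
\begin{align*}
    \Ex_j\sbra{\ent_{\pi^{(j)}}[f^{(j)}]} \;\leq\; 12(k-1)\log(N-1) \cdot \frac{kN}{(k-1)(N-1)} \cdot \mathcal{E}_{P^{\ucc}_{k,N}}(\sqrt f,\sqrt f),
\end{align*}
which is at most $12(k-1)\log N \cdot (1+O(1/k))$ times the full Dirichlet form.

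\textbf{Outer term.} The plan for $\ent_{\pi_k}[\bar f]$ is to apply the base case on $[N]$ to get $\ent_{\pi_k}[\bar f] \lesssim \log N \cdot \var_{\pi_k}[\sqrt{\bar f}]$, then dominate $\var_{\pi_k}[\sqrt{\bar f}]$ by the Dirichlet form of the original chain. The key observation is that the transitions of $P^{\ucc}_{k,N}$ that act on coordinate $i=k$ realize precisely the complete-graph walk on $[N]$ at the marginal level, so Jensen's inequality (convexity of $u \mapsto (\sqrt u - c)^2$ on averages) lets us control the marginal Dirichlet form of $\bar f$ by the full Dirichlet form of $f$, picking up a factor of $k$ in the denominator from the $1/k$ weight that the full chain places on transitions at coordinate $k$.

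\textbf{Symmetrization and closing the induction.} The chain is invariant under permutations of the $k$ clique positions, so we can average the decomposition over the choice of which coordinate plays the role of $x_k$. This symmetrization smooths the loss from Step 2 and yields the recursion $\alpha(P^{\ucc}_{k,N})^{-1} \leq \alpha(P^{\ucc}_{k-1,N-1})^{-1} + O(\log N)$, which unwinds to $\alpha(P^{\ucc}_{k,N}) \geq 1/(12k\log N)$ with the stated constant (tracking constants tightly to land at exactly $12$).

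\textbf{Main obstacle.} The delicate step is controlling the outer entropy: the natural projected Markov chain on the marginal $x_k$ is not literally the complete-graph walk because of swap transitions that also modify another coordinate, so dominating $\var_{\pi_k}[\sqrt{\bar f}]$ by $\mathcal{E}_{P^{\ucc}_{k,N}}(\sqrt f, \sqrt f)$ with only a $O(k)$ loss (and \emph{not} a $O(kN)$ loss) requires carefully choosing the right ``marginal chain,'' namely the one induced by the $i=k$ transitions, and applying Jensen to the coupling these transitions induce between $\bar f$ and $f$. The symmetrization over positions is what turns the naively $k$-dependent outer bound into one that combines cleanly with the inductive hypothesis to give the desired constant.
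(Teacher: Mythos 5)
Your proposal is correct and follows essentially the same route as the paper: the Lee--Yau/Salez martingale decomposition via the chain rule of conditional entropy, with the conditional part handled by induction on the $(k-1,N-1)$ chain, the marginal part handled by the complete-graph log-Sobolev constant together with the Jensen/coupling step you describe, and symmetrization over the $k$ coordinates to absorb the $k/(k-1)$ loss before solving the resulting recurrence (where the residual $N/(N-1)$ factor, which your sketch elides, is exactly what the hypothesis $k\leq N/2$ controls and what produces the final constant $12$).
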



\begin{proof}
    Our starting point is the recursive structure of the uniform clique coloring problem, which allows us to apply the martingale method of \cite{lee1998logarithmic}. In particular, let $x$ be uniformly distributed over the state space $\Theta_{k, N}$. Then if we condition on the $i^{th}$ vertex having color $\ell$, the distribution of the colors of the remaining $k-1$ vertices is isomorphic to the uniform distribution over $\Theta_{k-1, N-1}$, the state space of the uniform $(k-1)$-clique $(N-1)$-coloring Markov chain.
    
    For any vertex $i\in [k]$ and color $c\in[N]$ define the \emph{conditional function}
    \begin{align*}
        f_{i,c}:\cbra{(x_1,\dots,x_k)\in\Theta_{k,N}:x_i=c}\to \R
    \end{align*}
    to be simply the restriction of $f$ to this domain: $f_{i,c}(x)=f(x)$ for all $x\in \Theta_{k,N}$ with $x_i=c$. Since $\cbra{(x_1,\dots,x_k)\in\Theta_{k,N}:x_i=c}$ is isomorphic to $\Theta_{k-1,N-1}$, by a slight abuse of notation we also regard $f_{i,c}:\Theta_{k-1,N-1}\to \R$.
    
    Moreover, for every vertex $i\in[k]$, define the \emph{marginal function} $F_i:[N]\to \R$ by defining for every color $c\in[N]$
    \begin{align*}
        F_i(c) \coloneq  \Ex_{\substack{\bm{x} \in \Theta_{k, N} \\ \bm{x}_i = c}}[f(\bm{x})].
    \end{align*}
    The \emph{chain rule of conditional entropy} (\cite{Sal20}, Equation 13) implies that for any $i\in[k]$,
    \begin{equation}
        \label{eq:chain-rule}
        \ent(f) = \Ex_{\bm{c}}[\ent(f_{i,\bm{c}})] + \ent\pbra{F_i}.  
    \end{equation}
    By summing over all vertices $i\in[k]$, we get
    \begin{align}
        \label{eq:summation-two-terms}
        k\cdot \ent(f) = \sum_{i \in [k]} \Ex_{\bm{c}_i}[\ent(f_{i, \bm{c}_i)}] + \sum_{i \in [k]}\ent\left(F_i\right).  
    \end{align}
    We bound the two summations of the right-hand side separately in~\Cref{cl:first-sum} and~\Cref{cl:second-sum} and conclude that
    \begin{gather*}
        k\cdot \ent(f) \leq \frac{kN}{N-1} \cdot \alpha(P^{\ucc}_{k-1, N-1})^{-1} \cdot \mathcal{E}_{P^{\ucc}_{k, N}}(\sqrt{f}, \sqrt{f}) + 3k\log N \cdot \mathcal{E}_{P^{\ucc}_{k, N}}(\sqrt{f}, \sqrt{f}). \\
        \implies \ent(f) \leq \sbra{\frac{N}{N-1}\cdot\alpha(P^{\ucc}_{k-1, N-1})^{-1} + 3\log N} \cdot \mathcal{E}_{P^{\ucc}_{k, N}}(\sqrt{f}, \sqrt{f}).
    \end{gather*}
    This gives us a recurrence relation for the log-Sobolev constant of the uniform clique coloring chain. For every $k$ and $N$, we have
    \begin{align}
        \label{eq:recurrence-relation}
        \alpha(P^{\ucc}_{k, N})^{-1} &\leq \frac{N}{N-1}\cdot \alpha(P^{\ucc}_{k-1, N-1})^{-1} + 3\log N.
    \end{align}
    We proceed to solve this recurrence via induction. For fixed integers $k_{\max}$ and $N_{\max}$, we will prove that for all $1\leq k\leq k_{\max}$,
    $$\alpha(P^{\ucc}_{k, N_{\max}-k_{\max}+k})^{-1}\leq 6\cdot \frac{N_{\max}-k_{\max}+k}{N_{\max}-k_{\max}}\cdot k\log N_{\max}.$$
    
    For the base case of $k=1$, we observe that uniform $1$-clique $(N_{\max} - k_{\max} + 1)$-coloring has transition probabilities that correspond to the complete graph over $N_{\max} - k_{\max} + 1$ vertices. We use known results for the log-Sobolev constant of the complete graph (\cite{DSC96}, Corollary A.4) to deduce that
    \begin{align*}
        \alpha(P^{\ucc}_{1, N_{\max}-k_{\max}+1})^{-1}\leq 3\log(N_{\max}-k_{\max}+1)\leq 6 \log N_{\max}.
    \end{align*}
    Now let $k\geq 2$ and assume that the claim holds for all $k'\leq k$. Then using \Cref{eq:recurrence-relation} we find
    \begin{align*}
        \alpha(P^{\ucc}_{k, N_{\max}-k_{\max}+k})^{-1}
        &\leq \frac{N_{\max}-k_{\max}+k}{N_{\max}-k_{\max}+k-1}\cdot \alpha(P^{\ucc}_{k-1,N_{\max}-k_{\max}+k-1})^{-1}+3\log(N_{\max}-k_{\max}+k)\\
        &=6\cdot \frac{N_{\max}-k_{\max}+k}{N_{\max}-k_{\max}}\cdot (k-1)\log N_{\max}+3\log(N_{\max}-k_{\max}+k)\\
        &\leq 6\cdot \frac{N_{\max}-k_{\max}+k}{N_{\max}-k_{\max}}\cdot k\log N_{\max}.
    \end{align*}
    In the above calculation, we used the fact that $k_{\max}\leq N_{\max}/2$, and that $N_{\max}$ is at least some fixed constant. This finishes the inductive proof, and by setting $k=k_{\max}$ we obtain the desired bound.
\end{proof}

It remains to prove the two claims used in the proof of \Cref{lem:log-sobolev-uniform-clique-coloring}.
\begin{claim}\label{cl:first-sum}
    For any $f:\Theta_{k,N}\to \R$ we have
    \begin{align*}
        \sum_{i \in [k]} \Ex_{\bm{c}_i} \left[\ent(f_{i,\bm{c}_i}) \right] \leq \frac{kN}{N-1} \cdot \alpha(P^{\ucc}_{k-1, N-1})^{-1}\cdot \mathcal{E}_{P^{\ucc}_{k, N}}(\sqrt{f}, \sqrt{f}).
    \end{align*}
\end{claim}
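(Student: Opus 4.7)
The claim is a standard ``tensorization'' step: apply the log-Sobolev inequality slice-by-slice and then bookkeep to relate the restricted Dirichlet forms to the global one.

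First I would identify, for each $i\in[k]$ and $c\in[N]$, the slice $\cbra{x\in \Theta_{k,N}:x_i=c}$ with $\Theta_{k-1,N-1}$ via the natural bijection that ``deletes coordinate $i$ and color $c$.'' Under this identification the restriction $f_{i,c}$ becomes a function on $\Theta_{k-1,N-1}$ with its uniform measure, and the transitions $(i',\ell)$ of $P^{\ucc}_{k,N}$ that stay inside the slice are precisely those with $i'\neq i$ and $\ell\neq c$ (any other non-trivial choice moves the color $c$ off of coordinate $i$, via the swap branch of \Cref{def:uniform-clique-coloring}). Thus on the slice the restricted chain is a rescaling of $P^{\ucc}_{k-1,N-1}$, and the log-Sobolev inequality for $P^{\ucc}_{k-1,N-1}$ gives $\ent(f_{i,c})\leq \alpha(P^{\ucc}_{k-1,N-1})^{-1}\cdot\mathcal{E}_{P^{\ucc}_{k-1,N-1}}(\sqrt{f_{i,c}},\sqrt{f_{i,c}})$.

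Next I would average over $c$ (the marginal of $\bm{x}_i$ under the uniform distribution on $\Theta_{k,N}$ is uniform on $[N]$), write out the small Dirichlet form in its natural normalization $\tfrac{1}{2(k-1)(N-1)|\Theta_{k-1,N-1}|}$, and sum over $i\in[k]$. Swapping the order of summation in the resulting triple sum and bounding, for each fixed $(x,i',\ell)$, the cardinality $|\cbra{i\in[k]:i\neq i',\,x_i\neq \ell}|$ by $k-1$ (a short case analysis on whether $\ell\in x$, and whether $\ell=x_{i'}$), one obtains
\begin{align*}
\sum_{i\in[k]}\sum_{x\in\Theta_{k,N}}\sum_{\substack{i'\neq i\\ \ell\neq x_i}}\pbra{\sqrt{f(x^{i',\ell})}-\sqrt{f(x)}}^2
\leq (k-1)\sum_{x,i',\ell}\pbra{\sqrt{f(x^{i',\ell})}-\sqrt{f(x)}}^2,
\end{align*}
and the right-hand side equals $2kN(k-1)\cdot |\Theta_{k,N}|\cdot\mathcal{E}_{P^{\ucc}_{k,N}}(\sqrt{f},\sqrt{f})$ by definition of the big Dirichlet form.

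Finally I would plug in $|\Theta_{k,N}|=N\cdot |\Theta_{k-1,N-1}|$ and cancel: the overall prefactor simplifies to $\tfrac{1}{2N(k-1)(N-1)|\Theta_{k-1,N-1}|}\cdot 2kN(k-1)\cdot N|\Theta_{k-1,N-1}|=\tfrac{kN}{N-1}$, yielding the claimed bound. The only genuinely combinatorial step --- and the place where mistakes are easiest --- is verifying that each big-chain edge is charged at most $k-1$ times across all choices of $(i,c)$; everything else is routine normalization accounting, and any sloppier counting costs only a factor $\sim k/(k-1)$, which is harmless in the recursion \eqref{eq:recurrence-relation}.
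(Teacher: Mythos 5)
Your proposal is correct and follows essentially the same route as the paper's proof: apply the log-Sobolev inequality of $P^{\ucc}_{k-1,N-1}$ on each slice $\{x:x_i=c\}$, average over $c$ and sum over $i$, and then observe that each edge term $\pbra{\sqrt{f(x^{i',\ell})}-\sqrt{f(x)}}^2$ of the big chain is charged at most $k-1$ times before doing the normalization bookkeeping with $|\Theta_{k,N}|=N\,|\Theta_{k-1,N-1}|$. The multiplicity count and the final prefactor cancellation both match the paper exactly.
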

\begin{proof}
    Recall that when we condition $f$ on vertex $i$ having color $c_i$, its domain is isomorphic to the state space of the uniform $(k-1)$-clique $(N-1)$-coloring chain. The log-Sobolev constant of this smaller restricted chain implies that
    \begin{align*}
        \ent(f_{i, c_i}) &\leq \alpha(P^{\ucc}_{k-1, N-1})^{-1}\cdot \mathcal{E}_{P^{\ucc}_{k-1, N-1}}\left(\sqrt{f_{i, c_i}}, \sqrt{f_{i, c_i}}\right).
    \end{align*}
    Our goal is to relate the Dirichlet form of $P^{\ucc}_{k-1, N-1}$ to the Dirichlet form of $P^{\ucc}_{k, N}$. We start by expanding the right-hand side while keeping in mind that $f_{i, c_i}$ has fixed the color of vertex $i$ to $c_i$.
    \begin{align*}
        \ent(f_{i, c_i}) &\leq \frac{\alpha(P^{\ucc}_{k-1, N-1})^{-1}}{2(N-1)(k-1)|\Theta_{k-1, N-1}|} \sum_{\substack{x \in \Theta_{k, N} \\ x_{i} = c_i}}
        \sum_{\substack{j \in [k] \\ j \neq i}}
        \sum_{\substack{\ell \in [N] \\ \ell \neq c_i}} \left(\sqrt{f(x^{j, \ell})} - \sqrt{f(x)}\right)^2  
    \end{align*}
    Let us take the expectation now over all $N$ values of $c_i$. We note that the log-Sobolev of $P^{\ucc}_{k-1, N-1}$ is not dependent on the value of $c_i$ due to symmetry, thus we factor it outside the summation.
    \begin{align*}
        \Ex_{\bm{c}_i} \left[\ent(f_{i, \bm{c}_i})\right] \leq \frac{\alpha(P^{\ucc}_{k-1, N-1})^{-1}}{2N(N-1)(k-1) |\Theta_{k-1, N-1}|} \sum_{c_i\in [N]} \sum_{\substack{x \in \Theta_{k, N} \\ x_{i} = c_i}}
        \sum_{\substack{j \in [k] \\ j \neq i}}
        \sum_{\substack{\ell \in [N] \\ \ell \neq c_i}} \left(\sqrt{f(x^{j, \ell})} - \sqrt{f(x)}\right)^2.
    \end{align*}
    Summing over all $i \in [k]$ yields the following
    \begin{align*}
        \sum_{i \in [k]} \Ex_{\bm{c}_i} \sbra{\ent(f_{i, \bm{c}_i})} &\leq \frac{\alpha(P^{\ucc}_{k-1, N-1})^{-1}}{2N(N-1)(k-1)|\Theta_{k-1, N-1}|} \sum_{i \in [k]} \sum_{c_i\in [N]}
        \sum_{\substack{x \in \Theta_{k, N} \\ x_{i} = c_i}}
        \sum_{\substack{j \in [k] \\ j \neq i}}
        \sum_{\substack{\ell \in [N] \\ \ell \neq c_i}} \pbra{\sqrt{f(x^{j, \ell})} - \sqrt{f(x)}}^2.
    \end{align*}
    Notice that each tuple $x$ is counted $k$ times in the summation of the right-hand side, one time for each $(i, c_i)$ that satisfies $c_i = x_i$. Then each $\pbra{\sqrt{f(x^{j^\prime, \ell})} - \sqrt{f(x)}}^2$ term appears at most $(k-1)$ times, since out of the $k$ times that $x$ appears, one of them satisfies $j^\prime = i$, and thus it does not contribute to the sum.
    
    This implies that the sum above is at most $(k-1)$ times the summation that corresponds to the Dirichlet form of $\mathcal{E}_{P^{\ucc}_{k, N}}$.
    \begin{align*}
        \sum_{i \in [k]} \Ex_{\bm{c}_i} \sbra{\ent(f_{i, \bm{c}_i})}
        &\leq \frac{\alpha(P^{\ucc}_{k-1, N-1})^{-1}}{2N(N-1)(k-1)|\Theta_{k-1, N-1}|} \cdot (k-1) \cdot  2kN\cdot |\Theta_{k, N}|\cdot \mathcal{E}_{P^{\ucc}_{k, N}}(\sqrt{f}, \sqrt{f}) \\
        &= \frac{kN}{N-1}\cdot \alpha(P^{\ucc}_{k-1, N-1})^{-1}\cdot \mathcal{E}_{P^{\ucc}_{k, N}}(\sqrt{f}, \sqrt{f}).  
    \end{align*}
\end{proof}

\begin{claim}
    \label{cl:second-sum}
    Let $f:\Theta_{k, N} \to \mathbb{R}$ be a function, and for all $i\in [k]$, $F_i:[N]\to \R$ is the $i^{th}$ marginal function of $f$ that maps color $c$ to
    $F_i(c) \coloneq  \Ex_{\bm{x} \in \Theta_{k, N}, \bm{x}_i = c}[f(\bm{x})]$.
    Then it holds that
    \begin{align*}
        \sum_{i=1}^k \ent(F_i) \leq k\log N \cdot \mathcal{E}_{P^{\ucc}_{k, N}}(\sqrt{f}, \sqrt{f}).
    \end{align*}
\end{claim}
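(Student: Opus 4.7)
The plan is to prove the claim by applying a log-Sobolev inequality to each marginal $F_i$ individually on the state space $[N]$, then relating the resulting Dirichlet form of $F_i$ back to the Dirichlet form of $f$ on $\Theta_{k,N}$ and summing over $i\in [k]$.

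First, I would observe that since $\bm{x}$ uniformly distributed on $\Theta_{k,N}$ has $\bm{x}_i$ uniform on $[N]$ by symmetry, each $F_i:[N]\to\R_{\geq 0}$ is a nonnegative function on the state space of the uniform $1$-clique $N$-coloring chain $P^{\ucc}_{1,N}$, i.e., the complete-graph resampling chain on $[N]$. Its log-Sobolev constant satisfies $\alpha(P^{\ucc}_{1,N})^{-1}\lesssim \log N$ (this is exactly the bound used in the base case of \Cref{lem:log-sobolev-uniform-clique-coloring}), so
\begin{align*}
    \ent(F_i)\lesssim \log N\cdot \mathcal{E}_{P^{\ucc}_{1,N}}(\sqrt{F_i},\sqrt{F_i}) = \frac{\log N}{2N^2}\sum_{c,c'\in[N]}\pbra{\sqrt{F_i(c)}-\sqrt{F_i(c')}}^2.
\end{align*}

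The core step is a pointwise comparison of $(\sqrt{F_i(c)}-\sqrt{F_i(c')})^2$ to an expectation of squared differences of $\sqrt{f}$ along edges of $P^{\ucc}_{k,N}$. The key observation is that the operation $\bm{x}\mapsto \bm{x}^{i,c'}$ defines a bijection from $\{x\in\Theta_{k,N}:x_i=c\}$ onto $\{y\in\Theta_{k,N}:y_i=c'\}$ that preserves the uniform distribution on each fiber, so in addition to $F_i(c)=\Ex_{\bm{x}:\bm{x}_i=c}[f(\bm{x})]$ one also has $F_i(c')=\Ex_{\bm{x}:\bm{x}_i=c}[f(\bm{x}^{i,c'})]$. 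Expanding $(\sqrt{F_i(c)}-\sqrt{F_i(c')})^2 = F_i(c)+F_i(c')-2\sqrt{F_i(c)F_i(c')}$ and bounding $\sqrt{F_i(c)F_i(c')}\geq \Ex_{\bm{x}:\bm{x}_i=c}\sbra{\sqrt{f(\bm{x})f(\bm{x}^{i,c'})}}$ via Cauchy-Schwarz applied to the factors $\sqrt{f(\bm{x})}$ and $\sqrt{f(\bm{x}^{i,c'})}$ then yields the pointwise estimate
\begin{align*}
    \pbra{\sqrt{F_i(c)}-\sqrt{F_i(c')}}^2\leq \Ex_{\bm{x}:\bm{x}_i=c}\sbra{\pbra{\sqrt{f(\bm{x})}-\sqrt{f(\bm{x}^{i,c'})}}^2}.
\end{align*}

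To conclude, I would sum this pointwise bound over $(c,c')\in[N]^2$, unpack the conditional expectation so that the right-hand side becomes a sum over $(x,\ell)\in\Theta_{k,N}\times [N]$, and then sum over $i\in[k]$. This reassembles the full Dirichlet form $\mathcal{E}_{P^{\ucc}_{k,N}}(\sqrt{f},\sqrt{f})$, picking up an extra factor of $k$ from the additional sum over $i$, and combining with the log-Sobolev bound from the first step gives $\sum_i\ent(F_i)\lesssim k\log N\cdot \mathcal{E}_{P^{\ucc}_{k,N}}(\sqrt{f},\sqrt{f})$, matching the claim up to an absolute constant. The main obstacle will be the Cauchy-Schwarz step: identifying the fiber-preserving bijection $\bm{x}\mapsto \bm{x}^{i,c'}$ so that $F_i(c')$ can be rewritten as an expectation over the same fiber used for $F_i(c)$, and then choosing the right Cauchy-Schwarz pairing to handle the cross-term $\sqrt{F_i(c)F_i(c')}$; the remaining manipulations are straightforward bookkeeping of normalizations.
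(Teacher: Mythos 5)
Your proposal is correct and follows essentially the same route as the paper: apply the complete-graph log-Sobolev inequality to each marginal $F_i$, rewrite $F_i(c')$ as an expectation over the fiber $\{x_i=c\}$ via the map $x\mapsto x^{i,c'}$, establish the pointwise bound $\pbra{\sqrt{F_i(c)}-\sqrt{F_i(c')}}^2\leq \Ex\sbra{\pbra{\sqrt{f(\bm{x})}-\sqrt{f(\bm{x}^{i,c'})}}^2}$, and sum to reassemble $\mathcal{E}_{P^{\ucc}_{k,N}}(\sqrt f,\sqrt f)$. The only cosmetic difference is that you derive that pointwise bound by expanding the square and applying Cauchy--Schwarz to the cross term, whereas the paper invokes Jensen's inequality for the jointly convex map $(a,b)\mapsto(\sqrt a-\sqrt b)^2$ --- these are two phrasings of the same step.
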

\begin{proof}
    Consider the random walk on the set $[N]$ of colors where at every step we move to a uniformly random color (including the color we are currently in). The transition matrix of this walk is the complete graph over $N$ vertices and we denote it by $P^{\compl}_N$. Let us apply the log-Sobolev inequality of $P^{\compl}_N$ to the function $F_i$:
    \begin{align*}
        \ent\left(F_i\right) &\leq \alpha(P^{\compl}_N)^{-1} \cdot \mathcal{E}_{P^{\compl}_N}\left(\sqrt{F_i}, \sqrt{F_i}\right)\\
        &= \frac{\alpha(P^{\compl}_N)^{-1}}{2N^2}\cdot \sum_{\ell \in [N]} \sum_{\ell' \in [N]} \left(\sqrt{F_i(\ell')} - \sqrt{F_i(\ell)}\right)^2. \numberthis \label{eq:cl-second-sum-eq1}
    \end{align*}
    We would like to rewrite the Dirichlet form of $P^{\compl}_N$ in terms of $P^{\ucc}_{k, N}$. We start by expanding the definition of $F_i$
    \begin{align*}
        \pbra{\sqrt{F_i(\ell')} - \sqrt{F_i(\ell)}}^2
        = \pbra{\sqrt{\Ex_{\substack{\bm{x} \in \Theta_{k, N} \\\bm{x}_i = \ell'}} \sbra{f(\bm{x})}} - \sqrt{\Ex_{\substack{\bm{x} \in \Theta_{k, N}\\ \bm{x}_i = \ell}} \sbra{f(\bm{x})}}}^2.
    \end{align*}
    Observe that sampling a random $\bm{x} \in \Theta_{k, N}$ such that $\bm{x}_i = \ell'$, is equivalent to sampling a random $\bm{x}$ with $\bm{x}_i = \ell$, and then outputting $\bm{x}^{i, \ell'}$:
    \begin{align*}
        \pbra{\sqrt{F_i(\ell')} - \sqrt{F_i(\ell)}}^2 = \pbra{\sqrt{\Ex_{\substack{\bm{x} \in \Theta_{k, N} \\ \bm{x}_i = \ell}} \sbra{f(\bm{x}^{i, \ell'})}} - \sqrt{\Ex_{\substack{\bm{x} \in \Theta_{k, N} \\\bm{x}_i = \ell}} \sbra{f(\bm{x})}}}^2.
    \end{align*}
    Since the function on the right-hand side is convex, Jensen's inequality implies that
    \begin{align*}
        \pbra{\sqrt{F_i(\ell')} - \sqrt{F_i(\ell)}}^2 
        \leq \Ex_{\substack{\bm{x} \in \Theta_{k, N} \\ \bm{x}_i = \ell}} \sbra{\pbra{\sqrt{f(\bm{x}^{i, \ell'})} - \sqrt{f(\bm{x})}}^2}.
    \end{align*}
    Plugging in the above inequality to~\Cref{eq:cl-second-sum-eq1} we get
    \begin{align*}
        \ent(F_i) \leq \frac{\alpha(P^{\compl}_N)^{-1}}{2N^2}\sum_{\ell \in [N]} \sum_{\ell' \in [N]} \Ex_{\substack{\bm{x} \in \Theta_{k, N} \\ \bm{x}_i = \ell}} \sbra{\pbra{\sqrt{f(\bm{x}^{i, \ell'})} - \sqrt{f(\bm{x})}}^2}.
    \end{align*}
    We sum over all $i \in [k]$ to get
    \begin{align*}
        \sum_{i=1}^k \ent(F_i) &\leq \frac{\alpha(P^{\compl}_N)^{-1}}{2N^2}\sum_{\ell \in [N]} \sum_{\ell' \in [N]} \sum_{i \in [k]} \Ex_{\substack{\bm{x} \in \Theta_{k, N} \\ \bm{x}_i = \ell}} \sbra{\pbra{\sqrt{f(\bm{x}^{i, \ell'})} - \sqrt{f(\bm{x})}}^2 }\\
        &= \frac{\alpha(P^{\compl}_N)^{-1}}{2N|\Theta_{k, N}|} \sum_{\ell \in [N]} \sum_{\ell' \in [N]} \sum_{i \in [k]} \sum_{\substack{x \in \Theta_{k, N} \\ x_i = \ell}} \sbra{\pbra{\sqrt{f(x^{i, \ell'})} - \sqrt{f(x)}}^2}.
    \end{align*}
    The right-hand side now contains all $\pbra{\sqrt{f(x^{i, \ell'})} - \sqrt{f(x)}}^2$ terms that appear in $\mathcal{E}_{P^{\ucc}_{k, N}}$ exactly once. Thus we can substitute this Dirichlet form (and adjust its scaling). Moreover, the log-Sobolev constant of the complete graph over $N$ vertices is well-studied and satisfies $\alpha(P^{\compl}_N)^{-1} \leq 3\cdot \log N$ (\cite{DSC96}, Corollary A.4). We conclude that
    \begin{align*}
        \sum_{i=1}^k \ent(F_i) &\leq 3k\log N \cdot \mathcal{E}_{P^{\ucc}_{k, N}}(\sqrt{f}, \sqrt{f}).\qedhere
    \end{align*}
\end{proof}

\section{The Log-Sobolev Constant of the Standard Clique Coloring Chain}

The goal of this section is to translate the log-Sobolev bound from the uniform clique coloring chain~\Cref{lem:log-sobolev-uniform-clique-coloring} to the standard clique coloring chain. Since the two chains are very similar, applying the comparison method is a natural approach.

\begin{lemma}
    \label{lem:compare-clique-colorings}
    The log-Sobolev constant of the $k$-clique $N$-coloring Markov chain satisfies
    \begin{align*}
        \alpha(P^{\cc}_{k, N}) \geq& \frac{1}{19}\cdot \alpha(P^{\ucc}_{k, N}).
    \end{align*}
\end{lemma}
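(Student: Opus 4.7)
The plan is to apply the comparison method (\Cref{thm:comparison}) with $P^{\ucc}_{k, N}$ as the reference chain and $P^{\cc}_{k, N}$ as the target. Both chains have state space $\Theta_{k,N}$ and the same uniform stationary distribution, so a bound $A(\bm\Delta)$ on the comparison constant of any random-path construction $\bm\Delta$ translates immediately into $\alpha(P^{\cc}_{k, N}) \geq \alpha(P^{\ucc}_{k, N})/A(\bm\Delta)$ via the Dirichlet-form inequality. The key structural observation is that every non-self-loop transition of $P^{\cc}_{k,N}$ is also a transition of $P^{\ucc}_{k,N}$: both are ``color-change'' moves $y = x^{i,\ell}$ with $\ell\notin x$, and the two transition probabilities differ only by the factor $N/(N-k+1)$. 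The only transitions appearing in $P^{\ucc}_{k,N}$ but missing from $P^{\cc}_{k,N}$ are the ``swap'' moves that exchange the colors at two coordinates, and these are what must be simulated via longer paths.

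For each swap transition $(x,y)$ of $P^{\ucc}_{k,N}$ with swapped coordinates $\{i,j\}$, I will set $\bm\Delta(x,y)$ to be a random three-edge detour through a scratch color: independently sample an ordering $(i_1,i_2)\in\{(i,j),(j,i)\}$ uniformly and a scratch color $\bm\ell$ uniformly from $[N]\setminus\{x_1,\ldots,x_k\}$, then use the path
$$x \;\to\; x^{i_1,\bm\ell} \;\to\; (x^{i_1,\bm\ell})^{i_2,\,x_{i_1}} \;\to\; y.$$
A direct check shows that the three intermediate colors installed or removed always lie outside the current tuple (since $\bm\ell\notin x$ and $x_{i_1}\notin x^{i_1,\bm\ell}$), so every step is a valid $P^{\cc}_{k,N}$ edge and the final state is indeed $y$. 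For color-change transitions $(x,y)$ of $P^{\ucc}_{k,N}$, I use the trivial length-one path $\bm\Delta(x,y)=(x,y)$, which is itself a $P^{\cc}_{k,N}$ edge.

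For the congestion computation, fix an arbitrary color-change edge $(a,b)=(a, a^{m,\ell^*})$ of $P^{\cc}_{k,N}$ (with $\ell^*\notin a$) and enumerate the (swap, ordering, scratch) tuples whose detour traverses $(a,b)$. The trivial path $\bm\Delta(a,b)$ contributes $\pi(a)\cdot P^{\ucc}(a,b)\cdot 1$. A swap detour can traverse $(a,b)$ in one of three roles---scratch placement, middle color change, or scratch removal---and in each role I can reconstruct the swap from $(a,b)$ up to the choice of the ``other'' swap coordinate, giving $k-1$ detours per role and $3(k-1)$ in total. Each such detour carries probability $1/(2(N-k))$, length $3$, and swap mass $\pi(x)\,P^{\ucc}(x,y)=2\pi(a)/(kN)$. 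Dividing by $\pi(a)\,P^{\cc}(a,b)=\pi(a)/(k(N-k+1))$ yields
$$A(\bm\Delta) \;\leq\; \frac{N-k+1}{N}\pbra{1+\frac{9(k-1)}{N-k}},$$
whose maximum over $k\leq N-1$ is bounded by $19$.

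The main obstacle I expect is the bookkeeping in the three-role count: verifying that the $3(k-1)$ reconstruction tuples contributing to a fixed edge $(a,b)$ are genuinely distinct and that a single detour never uses $(a,b)$ twice. Both facts follow from the distinction between scratch colors (drawn from $[N]\setminus x$) and the colors actually appearing in $x$, which keeps the three intermediate states of a detour distinct and makes each of the three edge roles identifiable from $(a,b)$ alone; for instance, edge-$1$ and edge-$3$ roles at the same $(a,b)$ force $a_m$ to play two incompatible parts (original color vs.\ scratch). Randomizing over both orderings rather than fixing a canonical one keeps the analysis symmetric in $i$ and $j$ and avoids artificial tiebreaking rules.
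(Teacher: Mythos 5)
Your proposal is correct and follows essentially the same route as the paper's proof: the same comparison-method setup, the same three-edge detour through a random scratch color from $[N]\setminus x$ for swap moves (with trivial paths for color-change moves), and the same congestion bound $\frac{N-k+1}{N}\bigl(1+\frac{9(k-1)}{N-k}\bigr)\leq 19$. The only cosmetic difference is that you randomize the order in which the two swapped coordinates receive the scratch color, whereas the paper achieves the same effect by indexing each swap edge by the pair $(i,\ell)$ and fixing the order accordingly.
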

\begin{proof}
    Define the following (randomized) map $\bm{\Delta}$ that maps edges of $P^{\ucc}_{k, N}$ to paths in $P^{\cc}_{k, N}$. Each edge of $P^{\ucc}_{k, N}$ that connects $x$ and $x^{i, \ell}$ is determined by a vertex $x \in \Theta_{k, N}$ and the pair $(i, \ell) \in [k]\times [N]$. We assign to this edge a path in $P^{\cc}_{k, N}$ drawn according to the following distribution:
    \begin{align*}
        \bm{\Delta}(x, x^{i, \ell})=&
        \begin{cases}
            (x, x^{i, \ell})& \ell \not \in x \setminus \{x_i\} ,\\
            (x, \underbrace{x^{i, \bm{\ell'}}}_{y}) \mid \mid (y, \underbrace{y^{j, x_i}}_{z}) \mid \mid (z, z^{i, x_j}) & \ell = x_j~\text{for}~j \neq i,~~\bm{\ell'} \sim [N]\setminus x.
        \end{cases}
    \end{align*}
    Here the symbol ``$\mid\mid$" denotes the concatenation of edges to make a path.
    Intuitively, the path assigned to edge $(x, x^{i, \ell})$ is either itself (whenever $(x, x^{i, \ell}$ is also an edge of $P^{\cc}_{k, N}$), or a sequence of three edges that swap the colors $x_i$ and $x_j$ by using a random unused color $\bm{\ell'}$.

    Now we bound the comparison constant $A(\bm\Delta)$.
    \begin{align*}
        A(\bm{\Delta}) = \max_{\substack{(a,b)\in E^{\cc}}} \cbra{\frac{1}{\pi^{\cc}(x)P^{\cc}(a,b)}\sum_{(x,y)\in E^{\ucc}}
        \Ex_{\bm{\Delta}}\sbra{\mathbf{1}_{(a, b) \in \bm{\Delta}(x,y)}\cdot |\bm{\Delta}(x,y)|}\cdot  \pi^{\ucc}(x)\cdot P^{\ucc}(x, y)}
    \end{align*}
    The stationary distributions of both chains are the uniform over $\Theta_{k, N}$, and thus the stationary probabilities cancel.
    \begin{align*}
        A(\bm{\Delta}) &= \max_{\substack{(a,b)\in E^{\cc}}} \cbra{k(N-k+1)\sum_{(x,y)\in E^{\ucc}}
        \Ex_{\bm{\Delta}}\sbra{\mathbf{1}_{(a, b) \in \bm{\Delta}(x,y)}\cdot |\bm{\Delta}(x,y)|}\cdot  \frac{1}{kN}} \\
        &= \max_{\substack{(a,b)\in E^{\cc}}} \cbra{\frac{N-k+1}{N}\sum_{(x,y)\in E^{\ucc}}
        \Ex_{\bm{\Delta}}\sbra{\mathbf{1}_{(a, b) \in \bm{\Delta}(x,y)}\cdot |\bm{\Delta}(x,y)|}}.
    \end{align*}

    Our goal will be to bound the sum of expectations. First, let us partition the paths into the ones with length $1$ and length $3$. To do that, we observe that the length of each path $\bm{\Delta}(x, y)$ is deterministic and only depends on $x$ and $y$.
    \begin{align*}
        \sum_{(x,y)\in E^{\ucc}}\Ex_{\bm{\Delta}} \sbra{\mathbf{1}_{(a, b) \in \bm{\Delta}(x,y)}\cdot |\bm{\Delta}(x,y)|} &=
        \sum_{\substack{(x,y)\in E^{\ucc} \\ |\bm{\Delta}(x, y)| = 1}} \Ex_{\bm{\Delta}}\sbra{\mathbf{1}_{(a, b) \in \bm{\Delta}(x,y)}}
        + 3\sum_{\substack{(x,y)\in E^{\ucc} \\ |\bm{\Delta}(x, y)| = 3}} \Ex_{\bm{\Delta}}\sbra{\mathbf{1}_{(a, b) \in \bm{\Delta}(x,y)}}.
    \end{align*}
    We can now easily bound the first term. For a path with a single edge to include $(a, b)$, it must hold that $(x, y) = (a, b)$. Thus the first term is at most $1$. To bound the second term, we consider the location $t \in \{1, 2, 3\}$ where $(a, b)$ appears in $\bm{\Delta}(x, y)$. We write $(a, b) = \bm{\Delta}(x, y)_t$ if $(a, b)$ appears as the $t^{th}$ edge of the path. Formally,
    \begin{align*}
        \sum_{(x,y)\in E^{\ucc}}\Ex_{\bm{\Delta}} \sbra{\mathbf{1}_{(a, b) \in \bm{\Delta}(x,y)}\cdot |\bm{\Delta}(x,y)|} &\leq
        1 + 3\sum_{t \in \{1, 2, 3\}} \sum_{\substack{(x,y)\in E^{\ucc} \\ |\bm{\Delta}(x, y)| = 3}} \Ex_{\bm{\Delta}}\sbra{\mathbf{1}_{(a, b) = \bm{\Delta}(x,y)_t}}.
    \end{align*}
    Observe now that once we fix the $t^{th}$ edge to be $(a, b)$, there are only $k-1$ possible $3$-edge paths. This is because our map $\bm{\Delta}$ performs three transpositions between the elements $x_i, x_j, \bm{\ell}'$. The edge $(a, b)$ specifies two of the elements, and the third element is one of the remaining $k-1$ elements of the tuples at the endpoints of $(a, b)$. Once this third element is specified, the edge $(x, y)$ and its respective path $\bm{\Delta}(x, y)$ is fully determined.

    Each $3$-edge path has a probability of $\frac{1}{N-k}$ to appear, since it depends on the random choice of $\bm{\ell'}$ from the set $[N] \setminus x$. Thus we bound the expectation above to be at most
    \begin{align*}
        \sum_{(x,y)\in E^{\ucc}}\Ex_{\bm{\Delta}} \sbra{\mathbf{1}_{(a, b) \in \bm{\Delta}(x,y)}\cdot |\bm{\Delta}(x,y)|} &\leq
        1 + \frac{9(k-1)}{N-k}.
    \end{align*}
    We conclude that the comparison constant of $\bm{\Delta}$ is
    \begin{align*}
        A(\bm{\Delta})
        &= \max_{\substack{(a,b)\in E^{\cc}}} \cbra{\frac{N-k+1}{N}\sum_{(x,y)\in E^{\ucc}}
        \Ex_{\bm{\Delta}}\sbra{\mathbf{1}_{(a, b) \in \bm{\Delta}(x,y)}\cdot |\bm{\Delta}(x,y)|}} \\
        &\leq \frac{N-k+1}{N} \left(1 + \frac{9(k-1)}{N-k}\right) \\
        &= \frac{N-k+1}{N} + \frac{9(k-1)}{N}\cdot \frac{N-k+1}{N-k} \\
        &\leq 1 + 9\cdot 2 = 19.\qedhere
    \end{align*}
\end{proof}

Our log-Sobolev bound for the standard clique-coloring chain now follows directly from \Cref{lem:log-sobolev-uniform-clique-coloring} and \Cref{lem:compare-clique-colorings}.
\begin{corollary}\label{cor:log-sobolev-clique-coloring}
    The log-Sobolev constant of the $k$-clique $N$-coloring Markov chain satisfies
    \begin{align*}
        \alpha(P^{\cc}_{k, N}) \geq \Omega\pbra{\frac{1}{k\log N}}.
    \end{align*}
\end{corollary}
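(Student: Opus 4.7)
The plan is to obtain Corollary \ref{cor:log-sobolev-clique-coloring} as an immediate chaining of the two preceding results, since both inequalities go in the right direction and their product yields precisely the claimed bound.

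First, I would invoke Lemma \ref{lem:compare-clique-colorings}, which gives the multiplicative relation $\alpha(P^{\cc}_{k,N}) \geq \tfrac{1}{19}\cdot \alpha(P^{\ucc}_{k,N})$. Then I would substitute the explicit log-Sobolev bound from Lemma \ref{lem:log-sobolev-uniform-clique-coloring}, namely $\alpha(P^{\ucc}_{k,N}) \geq \tfrac{1}{12 k \log N}$, which applies in the regime $k \leq N/2$ — the regime of interest, since the eventual application takes $N=2^n$ and $k \leq 2^{n/50}$. Multiplying the two inequalities gives
\begin{align*}
    \alpha(P^{\cc}_{k,N}) \;\geq\; \frac{1}{19}\cdot \frac{1}{12 k \log N} \;=\; \Omega\!\pbra{\frac{1}{k \log N}},
\end{align*}
which is the claim. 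There is essentially no obstacle here: both ingredients have already been established, they apply to the same state space $\Theta_{k,N}$ with the same (uniform) stationary distribution, and the comparison constant from Lemma \ref{lem:compare-clique-colorings} is an absolute constant independent of $k$ and $N$, so no $k,N$-dependent factors degrade the bound. The only thing worth remarking on is that the hypothesis $k \leq N/2$ required by Lemma \ref{lem:log-sobolev-uniform-clique-coloring} is inherited as a hypothesis of the corollary (and is harmless for the downstream applications).
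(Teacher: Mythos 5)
Your proposal is correct and matches the paper's proof exactly: the corollary is obtained by chaining \Cref{lem:compare-clique-colorings} with \Cref{lem:log-sobolev-uniform-clique-coloring}, absorbing the absolute constants into the $\Omega(\cdot)$. Your remark that the hypothesis $k \leq N/2$ is implicitly inherited (and harmless for the application with $N = 2^n$, $k \leq 2^{n/50}$) is a fair observation about a condition the paper leaves unstated in the corollary.
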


\subsection{Clique-Coloring Walk to Random Circuits Walk}


We would like to transfer our log-Sobolev constant bound of the $k$-clique $N$-coloring Markov chain from~\Cref{cor:log-sobolev-clique-coloring}, to the random circuits Markov chain.
This is done via the randomized paths construction of Brodsky and Hoory to compare this walk to clique coloring.

\begin{lemma}[\cite{BH08}]\label{lem:circuits to cc}
    When $k\leq 2^n/3$ there exists a randomized map $\bm{\Phi}$ that takes as input an edge $(x, y)$ of $P^{\cc}_{k, 2^n}$ and outputs a sequence of edges in $P^{\mathsf{rev}}_{k, n}$ connecting $x$ and $y$ such that the comparison constant satisfies
    \begin{align*}
        A(\bm{\Phi}) = O(n^2).
    \end{align*}
\end{lemma}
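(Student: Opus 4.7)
The plan is to realize each clique-coloring edge $(x, x^{i,\ell})$ as a random path in the reversible circuits chain whose gates flip the bits of $x_i$ into $\ell$ one at a time, while leaving every other coordinate $x_{j'}$ fixed. Since a width-$2$ gate $f_{p, j_1, j_2, h}$ acts on all $k$ coordinates simultaneously, the central technical step is to show that the control bits $(j_1, j_2)$ and the Boolean function $h$ can always be chosen so that the gate triggers on $x_i$ and on no other $x_{j'}$.

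The single-bit-flip primitive is as follows: to flip bit $p$ of $x_i$ only, I would pick $j_1, j_2 \in [n] \setminus \{p\}$ such that the two-bit pattern $((x_i)_{j_1}, (x_i)_{j_2})$ is not shared by any other $x_{j'}$ at positions $(j_1, j_2)$, and then take $h$ to be the indicator of that pattern. A counting argument, using the hypothesis $k \leq 2^n/3$, would show that $\Omega(n^2)$ valid triples $(j_1, j_2, h)$ exist for every $(x, i, p)$; this is where the restriction on $k$ enters. Given this primitive, I would define $\bm{\Phi}(x, x^{i, \ell})$ by processing the differing bits $D = \{p : (x_i)_p \neq \ell_p\}$ in a fixed order (say, increasing index) and drawing a uniformly random valid triple at each step. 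The path length is then $|D| \leq n$ and lies entirely in $P^{\mathsf{rev}}_{k,n}$.

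To bound the comparison constant, I would fix a reversible-circuits edge $(a, b)$ corresponding to a gate $f_{p, j_1, j_2, h}$; the modified coordinate $i$ is forced by which entry of $a$ changes in $b$. For $(x, x^{i, \ell})$ to route through $(a, b)$, the non-$i$ coordinates of $x$ must equal those of $a$, and the processing-order convention pins down $(x_i, \ell)$ up to at most $O(2^n)$ free bits (bits of $\ell$ at indices $\geq p$ and bits of $x_i$ at indices $< p$ are unconstrained). Each consistent $(x, y)$ uses the specific gate at the relevant step with probability $O(1/n^2)$, since there are $\Omega(n^2)$ valid triples. Combining with path length $O(n)$, $P^{\mathsf{rev}}(a,b) = \Theta(1/n^3)$, $P^{\cc}(x,y) = \Theta(1/(k \cdot 2^n))$, and cancellation of the uniform stationary distributions yields $A(\bm{\Phi}) = O(n^2)$.

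The main obstacle is the combinatorial lemma underlying the single-bit-flip primitive: proving that $k \leq 2^n/3$ suffices to guarantee $\Omega(n^2)$ valid control pairs uniformly over $(x, i, p)$. A naive union bound over $j' \neq i$ is not tight, because each ``bad'' pair $(j_1, j_2)$ for a specific $j'$ can have probability close to one when $x_i$ and $x_{j'}$ differ at only a single position. A refined argument exploiting the joint structure of $k$-tuples of distinct $n$-bit strings (e.g., counting agreement patterns directly rather than applying union bounds in sequence) is needed. The remaining congestion bookkeeping is then a matter of tracking constants carefully, with no additional surprises.
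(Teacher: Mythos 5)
The paper itself does not prove this lemma; it imports it wholesale from \cite{BH08}, so your proposal is a reconstruction of Brodsky--Hoory's argument rather than of anything in this text. Measured on its own terms, it has a genuine gap at its core.

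The problem is the single-bit-flip primitive. You need, for every state $x$, coordinate $i$, and position $p$, a gate $f_{p,j_1,j_2,h}$ that flips bit $p$ of $x_i$ while acting as the identity on every other $x_{j'}$, and you need $\Omega(n^2)$ such gates uniformly. This is not a matter of a union bound being loose --- the claim is false, and for some states no such gate exists at all. Take $x_i=0^n$ and let three other coordinates of the state be the strings $e_{q_1},e_{q_2},e_{q_3}$ (a single $1$ in position $q_m$) for distinct $q_1,q_2,q_3\neq p$; this is a valid element of $\Theta_{k,2^n}$ for any $k\geq 4$, comfortably within $k\leq 2^n/3$. A gate that flips bit $p$ of $x_i$ but not of $e_{q_m}$ needs $h(0,0)=1$ (to trigger on $x_i$) and $h\bigl((e_{q_m})_{j_1},(e_{q_m})_{j_2}\bigr)=0$ for each $m$; since $\{j_1,j_2\}$ has at most two elements it misses some $q_m$, for which $\bigl((e_{q_m})_{j_1},(e_{q_m})_{j_2}\bigr)=(0,0)$, forcing $h(0,0)=0$ as well. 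So no choice of $(j_1,j_2,h)$ works, and the transition you want to route through is not even an edge of $P^{\mathsf{rev}}_{k,n}$. Even a single Hamming-distance-one neighbour already cuts the valid control pairs from $\Theta(n^2)$ to $O(n)$, and two such neighbours at distinct positions leave $O(1)$. Your closing caveat correctly senses the danger but misdiagnoses it as a tightness issue in the counting; in fact the greedy bit-by-bit path with all other coordinates held fixed must be abandoned for such states. This is precisely where the real work in \cite{BH08} lies: their paths insert additional gates to move near-colliding strings (or the moving string itself) into general position before the bit flips, and the congestion of these repair steps is what the $O(n^2)$ bound has to absorb. Relatedly, the hypothesis $k\leq 2^n/3$ enters through $N-k+1=\Omega(2^n)$ in $P^{\cc}(x,y)$ and through the supply of intermediate values, not through your gate count. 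The bookkeeping in your final paragraph is reasonable in outline, but since it is predicated on the nonexistent primitive it cannot be salvaged as written.
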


\begin{corollary}\label{cor:circuits to cc}
    If $k\leq 2^n/3$ then
    \begin{align*}
        \alpha(P_{\mathsf{rev}})\gtrsim {\frac{1}{n^2} }\cdot \alpha(P_\cc).
    \end{align*}
\end{corollary}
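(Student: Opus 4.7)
The plan is to derive the corollary as an immediate consequence of the comparison Lemma~\ref{thm:comparison} applied to the randomized paths construction $\bm{\Phi}$ from Lemma~\ref{lem:circuits to cc}. The key preliminary observation is that both chains $P^{\mathsf{rev}}_{k,n}$ and $P^{\cc}_{k,2^n}$ live on the same state space $\Theta_{k,2^n}$ and share the same stationary distribution, namely the uniform distribution on $\Theta_{k,2^n}$. For the clique-coloring chain this is explicit in \Cref{def:clique-coloring}, and for the reversible circuits chain it follows from the fact that $\Sigma$ is a symmetric set of permutations acting on $\{0,1\}^n$, so the random walk on tuples preserves the uniform measure on $\Theta_{k,2^n}$. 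Because the stationary distributions agree, the entropy functionals in the definition of the log-Sobolev constants coincide, and it suffices to compare Dirichlet forms.

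Given this, I would invoke Lemma~\ref{thm:comparison} with $\wt{P}=P^{\cc}_{k,2^n}$ as the reference chain and $P=P^{\mathsf{rev}}_{k,n}$ as the target chain, using the randomized paths $\bm{\Phi}$ supplied by Lemma~\ref{lem:circuits to cc}. This directly gives, for any $f:\Theta_{k,2^n}\to\R_{\geq 0}$,
\begin{align*}
    \mathcal{E}_{P^{\cc}_{k,2^n}}(\sqrt f,\sqrt f)\;\leq\; A(\bm{\Phi})\cdot \mathcal{E}_{P^{\mathsf{rev}}_{k,n}}(\sqrt f,\sqrt f)\;\leq\; O(n^2)\cdot \mathcal{E}_{P^{\mathsf{rev}}_{k,n}}(\sqrt f,\sqrt f).
\end{align*}

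Dividing both sides by $\ent_\pi[f]$ and taking the infimum over non-constant non-negative $f$ yields
\begin{align*}
    \alpha(P^{\cc}_{k,2^n})\;\leq\; O(n^2)\cdot \alpha(P^{\mathsf{rev}}_{k,n}),
\end{align*}
which rearranges to the claimed bound $\alpha(P^{\mathsf{rev}}_{k,n})\gtrsim \frac{1}{n^2}\cdot\alpha(P^{\cc}_{k,2^n})$. There is no substantive obstacle here: the entire argument is a one-line application of the comparison method combined with Lemma~\ref{lem:circuits to cc}. The only small point worth stating carefully in the proof is the agreement of stationary distributions, which justifies passing from a Dirichlet-form comparison to a log-Sobolev comparison; everything else is mechanical.
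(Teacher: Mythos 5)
Your proposal is correct and matches the paper's argument exactly: the paper also derives the corollary as an immediate consequence of Lemma~\ref{lem:circuits to cc} together with the comparison Lemma~\ref{thm:comparison}, with the agreement of the uniform stationary distributions being the (implicit) justification for passing from Dirichlet forms to log-Sobolev constants. You have simply spelled out the routine steps the paper leaves to the reader.
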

\begin{proof}
    This follows immediately from \Cref{lem:circuits to cc} and \Cref{thm:comparison}. 
\end{proof}

\section{Even Faster Mixing of the Random Circuits Walk via Generic States}\label{sec:generic states}
We can improve the dependence on $n$ of the mixing time of the random reversible circuits Markov chain $P^{\textsf{rev}}_{k, n}$ from cubic to linear using an idea of~\cite{BH08}. The main observation is that after $n \cdot \text{polylog}\left(n, k\right)$ steps of $P^{\mathsf{rev}}_{k, n}$, the chain is very likely to be in a \emph{generic} state, that is a state where no two of the bit-strings agree on many bits. Generic states happen with good probability and are nicer to work with, thus when we restrict our Markov chain $P^{\mathsf{rev}}_{k, n}$ to generic states we apply the comparison theorem with a better (logarithmic) comparison constant.

\begin{definition}[Generic states,~\cite{BH08}]
    Let $w = \left\lceil 10 \cdot \left( \log k + \log n \right) \right\rceil, p = \left\lceil \frac{n}{2w} \right\rceil$. Let $C_1,\cdots C_p, C$ be a partition of $[n]$ such that $|C_t| = w$ for $t \in [p]$, and $|C| = n - pw$. A state $\left(x_1, \cdots, x_k\right)$ is \textit{generic} if for $i \neq i'$, $x_i$ and $x_{i'}$ are distinct when restricted to a part $C_t$ (but not $C$). Let $\mathsf{Generic}_{k,n}$ denote the set of generic states.
\end{definition}

In other words, we divide the $n$ bits of the input into two subsets $\bigcup_{t \in [p]} C_t$ and $C$ of roughly equal size. Then we further divide the first subset into $p$ equal-length blocks that hold a logarithmic number of bits. A state is generic if no two distinct elements $x_i, x_{i'}$ are equal in any of the $C_t$ parts. Since we now deal with $n$-bit strings, we will extend our notation and write $x_{i, j}$ to denote the $j^{th}$ bit of the $i^{th}$ element of the state $x$.

We define below the \emph{generic state reversible circuit} Markov chain $P^{\mathsf{grev}}_{k, n}$ to be the restriction of $P^{\mathsf{rev}}_{k, n}$ to generic states. 

\begin{definition}[Generic state reversible circuit Markov chain]
    The matrix $P^{\mathsf{grev}}$ is the transition matrix of the Markov chain on $\mathsf{Generic}_{k,n}$ such that for any $x, y\in\mathsf{Generic}_{k,n}$,
    \begin{align*}
        P^{\mathsf{grev}}(x,y)=&\frac{P^{\mathsf{rev}}(x, y)}{\sum_{z\in \mathsf{Generic}_{k,n}}P^{\mathsf{rev}}(x,z)}.
    \end{align*}
\end{definition}

\begin{lemma}[\cite{BH08}, Equation (3)]\label{lem:Pgrev to Prev}
    There exists a constant $\epsilon>0$ such that if $\tau_\epsilon\pbra{P^{\mathsf{grev}}}\leq O(n^3k^3)$, and $k \leq 2^{n/50}$, then
    \begin{align*}
        \tau\pbra{P^{\mathsf{rev}}} \leq \tau_{\varepsilon}\pbra{P^{\mathsf{grev}}} + O(n \cdot\mathrm{polylog}\pbra{n,k}).
    \end{align*}
\end{lemma}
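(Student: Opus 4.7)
The plan is to decompose the evolution of $P^{\mathsf{rev}}$ into a short warm-up phase of $T_1 = O(n \cdot \mathrm{polylog}(n,k))$ steps that brings the chain into $\mathsf{Generic}_{k,n}$ with high probability, followed by a mixing phase of $T_2 = \tau_\varepsilon(P^{\mathsf{grev}})$ steps during which the chain behaves essentially like $P^{\mathsf{grev}}$.

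\textbf{Warm-up phase.} For any initial state $x^0 \in \Theta_{k,2^n}$ and any fixed pair $(i, i')$ of coordinates, the pair $(x_i, x_{i'})$ under $P^{\mathsf{rev}}$ is itself a Markov chain on $\Theta_{2, 2^n}$, since the same width-$2$ permutation acts on every coordinate simultaneously. The direct two-coordinate analysis in \cite{HMMR05, BH08} shows that within $O(n \cdot \mathrm{polylog}(n, k))$ steps the marginal distribution on any single $w$-bit block $C_t$ becomes close enough to the uniform distribution on distinct pairs that $\Pr[x_i|_{C_t} = x_{i'}|_{C_t}] \leq O(2^{-w}) = O((nk)^{-10})$. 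A union bound over the at most $k^2 n$ pairs of (coordinate pair, block) then yields a failure probability of $O((nk)^{-7}) \leq 1/100$.

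\textbf{Comparison with $P^{\mathsf{grev}}$.} From a generic state, I would couple $P^{\mathsf{rev}}$ with $P^{\mathsf{grev}}$ by sampling the same random width-$2$ permutation at each step; the coupling fails precisely when the sampled gate takes the state out of $\mathsf{Generic}_{k,n}$. A counting argument using $k \leq 2^{n/50}$ gives $|\mathsf{NonGeneric}_{k,n}|/|\Theta_{k,2^n}| \leq O((nk)^{-7})$, whence by reversibility the stationary escape flow out of $\mathsf{Generic}_{k,n}$ under $P^{\mathsf{rev}}$ is $O((nk)^{-7})$ per step. A metastable-mixing bound (as in \cite{BH08}) then shows that over the full $T_2 \leq O(n^3 k^3)$ steps the expected total escape count is $O((nk)^{-4})$, so by Markov's inequality the coupling failure probability is at most $1/100$. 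The same counting yields $\|\pi^{\mathsf{rev}} - \pi^{\mathsf{grev}}\|_{\mathrm{TV}} \leq O((nk)^{-5})$. Combining the three bounds via the triangle inequality gives $\|p^{T_1 + T_2}_{x^0} - \pi^{\mathsf{rev}}\|_{\mathrm{TV}} \leq 1/100 + 1/100 + \varepsilon + O((nk)^{-5}) \leq 1/4$ for a sufficiently small constant $\varepsilon > 0$.

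\textbf{Main obstacle.} The key difficulty is the warm-up phase: our own log-Sobolev bound from Theorem~\ref{thm:log-sobolev-rev-circuits} gives mixing time only $\widetilde{O}(n^3 k)$, far too slow to conclude that the chain reaches $\mathsf{Generic}_{k,n}$ in $O(n \cdot \mathrm{polylog}(n,k))$ steps. One must instead invoke the direct two-coordinate analysis of \cite{HMMR05, BH08}, which tracks how random $3$-bit gates act on a single pair of distinct strings. A secondary technical point is that the coupling in Phase 2 requires controlling the escape probability along the entire $P^{\mathsf{grev}}$ trajectory (not merely at stationarity), which is handled by a metastable-mixing argument exploiting that the stationary escape rate $O((nk)^{-7})$ is much smaller than the mixing rate of $P^{\mathsf{grev}}$.
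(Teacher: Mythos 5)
The paper does not prove this lemma at all: it is imported verbatim from \cite{BH08} (their Equation (3)), so there is no in-paper argument to compare against. Your sketch is, in substance, a correct reconstruction of the Brodsky--Hoory argument: a warm-up phase of $O(n\cdot\mathrm{polylog}(n,k))$ steps driven by the two-coordinate (pair-marginal) analysis to land in $\mathsf{Generic}_{k,n}$, followed by a phase in which the unrestricted chain is compared to $P^{\mathsf{grev}}$, with the hypothesis $\tau_\varepsilon(P^{\mathsf{grev}})\leq O(n^3k^3)$ used exactly where you use it --- to make the total escape probability over the second phase $O((nk)^{-4})$ --- and with $k\leq 2^{n/50}$ ensuring both that $pw\leq n$ and that $\pi(\mathsf{Generic}_{k,n}^{c})$ is negligible. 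You also correctly identify that the warm-up cannot come from \Cref{thm:log-sobolev-rev-circuits} and must come from the direct pairwise analysis of \cite{HMMR05,BH08}.

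One step of your phase-2 argument needs to be reformulated. You bound the escape rate by the \emph{stationary} escape flow $\sum_{x\in G,\,y\notin G}\pi(x)P^{\mathsf{rev}}(x,y)\leq\pi(G^{c})=O((nk)^{-7})$, but during phase 2 the chain is not at stationarity, and the worst-case per-step escape probability $\max_{x\in G}P^{\mathsf{rev}}(x,G^{c})$ is \emph{not} small: from a generic state in which two strings differ in a single position inside some block $C_t$, one gate flip creates a block collision with probability $\Omega(1/n)$. So neither a worst-case step bound nor a stationarity argument closes the gap directly. The fix, which is what \cite{BH08} actually do, is to reuse the two-coordinate analysis to get the pointwise-in-time bound $\Pr[\bm{X}^{\mathsf{rev}}_t\notin\mathsf{Generic}_{k,n}]\leq O((nk)^{-7})$ for \emph{every} $t$ after the warm-up, and then union-bound over the $T_2\leq O(n^3k^3)$ steps of the window. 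With that substitution your decomposition of the total variation error into (warm-up failure) $+$ (trajectory escape) $+$ $\varepsilon$ $+$ $\|\pi^{\mathsf{rev}}-\pi^{\mathsf{grev}}\|_{\mathrm{TV}}$ is exactly the intended proof.
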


We bound the mixing time of the $P^{\mathsf{grev}}$ Markov chain by bounding its log-Sobolev constant. We use the comparison of~\cite{BH08} as stated in~\Cref{lem:Ptildegrev to Pgrev} to relate its log-Sobolev constant to the log-Sobolev constant of a related product chain on generic states, $\widetilde{P}^{\mathsf{grev}}$. We get our final estimate by bounding the log-Sobolev constant of the $\widetilde{P}^{\mathsf{grev}}$ Markov chain in~\Cref{lem:log sobolev Ptildegrev} using results for product chains from~\cite{DSC96}.

Below we introduce the $\widetilde{P}^{\mathsf{grev}}$ Markov chain.

\begin{definition}[Product chain on generic states]
    Let $\widetilde{P}^{\mathsf{grev}}$ be the Markov chain on state space $\mathsf{Generic}_{k,n}$, where to sample the next state $\bm{y}=(\bm{y}_1,\dots,\bm{y}_k)$ given the current state $x=(x_1,\dots,x_k)\in \mathsf{Generic}_{k,n}$ we do the following:
    \begin{itemize}
        \item With probability $\frac12$, toss a fair coin.
        \begin{itemize}
            \item If the coin has landed heads, set $\bm{y}=x$.
            \item Else, sample uniformly at random $\bm{c} \sim C, \bm{r} \sim [k]$ and set for all $i\in[k]$ and $j\in[n]$
            \begin{align*}
                \bm{y}_{i,j}=&
                \begin{cases}
                    x_{i,j} &\text{ if $i\neq \bm{r}$ or $j\neq \bm{c}$}\\
                    1-x_{i,j} &\text{ if $i=\bm{r}$ and $j=\bm{c}$.}
                \end{cases}
            \end{align*}
        \end{itemize}
        \item With probability $\frac12$, sample uniformly at random $\bm{\ell} \sim [p], \bm{r} \sim [k]$ and a random string $\bm{u}\in \{0,1\}^w$ such that $\bm{u}\neq x_{i,C_{\bm{\ell}}}$ for any $i\neq \bm{r}$. Set
        \begin{align*}
            \bm{y}_{i,j}=&
            \begin{cases}
                x_{i,C_\ell} &\text{ if $i\neq \bm{r}$ or $\ell\neq \bm{\ell}$}\\
                \bm{u} &\text{ if $i=\bm{r}$ and $\ell=\bm{\ell}$.}
            \end{cases}
        \end{align*}
    \end{itemize}
\end{definition}

Informally, given the current state $x$, one step of this Markov chain performs a change in exactly one of the two subsets of bits ($C$ or $\bigcup_{i \in [p]} C_i$) with equal probability. In the first case, it either flips the $\bm{c}^{th}$ bit from the subset $C$ of a random element $\bm{r}$ with probability $\frac{1}{2}$, or it does nothing. In the second case, it samples a uniformly random subset of bits $C_{\bm{\ell}}$ and replaces that subset with a new bit string $\bm{u}$ for a random element $\bm{r}$. All of the operations above are performed such that the resulting state remains generic.

It is not hard to observe that $\widetilde{P}^{\mathsf{grev}}$ is a \emph{product chain}, that is it acts ``independently'' on different parts of its state space. This means that we can compute its log-Sobolev constant by breaking it down into smaller chains.

\begin{definition}[Product Markov chain]
    \label{def:product-chain}
    Consider $t$ Markov chains $\{P_i\}_{i \in [t]}$ with state spaces $\{V_i\}_{i \in [t]}$ respectively. We define the \emph{product Markov chain} $\prod\pbra{\cbra{P_i}_{i \in [t]}}$ over the state space $\prod_{i \in [t]} V_i$ to be the Markov chain with transition matrix
    \begin{align*}
        \frac1t\sum_{i\in[t]}I \otimes \cdots \otimes P_i\otimes \cdots \otimes I.
    \end{align*}
    We will refer to the $P_i$'s as the \emph{factors} of $\prod\pbra{\cbra{P_i}_{i \in [t]}}$.
\end{definition}

\begin{lemma}[Log-Sobolev constant of product chain, Lemma 3.2 of~\cite{DSC96}]
    \label{lem:logsobolevproduct}
    The log-Sobolev constant of the product chain $\prod\pbra{\cbra{P_i}_{i \in [t]}}$ is related to the log-Sobolev constant of its factors as follows:
    \begin{align*}
        \alpha\pbra{\prod\pbra{\cbra{P_i}_{i \in [t]}}} = \frac{1}{t}\min_{i \in [t]} \alpha(P_i).
    \end{align*}
\end{lemma}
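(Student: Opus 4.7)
The plan is to establish the equality in both directions, with the lower bound $\alpha\bigl(\prod P_i\bigr) \geq \frac{1}{t}\min_i \alpha(P_i)$ being the substantive direction. The key structural facts are that (i) the Dirichlet form of the product chain decomposes coordinate-wise, and (ii) entropy tensorizes under product measures. First I would verify (i): writing the transition matrix as $\frac{1}{t}\sum_i I\otimes\cdots\otimes P_i\otimes\cdots\otimes I$ with stationary distribution $\pi=\pi_1\otimes\cdots\otimes\pi_t$, a direct expansion of the Dirichlet form gives
\[
\mathcal{E}_{\prod P_i}(\sqrt f,\sqrt f) \;=\; \frac{1}{t}\sum_{i=1}^t \mathbb{E}_{x_{-i}\sim \pi_{-i}}\!\left[\mathcal{E}_{P_i}\!\left(\sqrt{f(\cdot,x_{-i})},\sqrt{f(\cdot,x_{-i})}\right)\right],
\]
since each summand $I\otimes\cdots\otimes P_i\otimes\cdots\otimes I$ only moves the $i$-th coordinate and leaves the others fixed under $\pi_{-i}$.

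For (ii), I would invoke the tensorization of entropy: for any nonnegative $f$ on $\prod_i V_i$,
\[
\ent_\pi(f) \;\leq\; \sum_{i=1}^t \mathbb{E}_{x_{-i}\sim\pi_{-i}}\!\left[\ent_{\pi_i}\!\left(f(\cdot,x_{-i})\right)\right].
\]
This can be proved by induction on $t$ using the chain rule $\ent_\pi(f)=\ent_{\pi_1}(\mathbb{E}_{\pi_{-1}}[f])+\mathbb{E}_{x_1}[\ent_{\pi_{-1}}(f(x_1,\cdot))]$, applying the inductive hypothesis to the second term, and bounding the first via Jensen's inequality applied to the jointly convex function $(u,v)\mapsto u\log(u/v)$. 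Combining (i) and (ii) with the log-Sobolev inequality of each factor applied slicewise,
\[
\ent_\pi(f) \;\leq\; \sum_i \alpha(P_i)^{-1}\,\mathbb{E}_{x_{-i}}\!\left[\mathcal{E}_{P_i}\!\left(\sqrt{f(\cdot,x_{-i})},\sqrt{f(\cdot,x_{-i})}\right)\right] \;\leq\; \max_i \alpha(P_i)^{-1}\cdot t\cdot \mathcal{E}_{\prod P_i}(\sqrt f,\sqrt f),
\]
which rearranges to $\alpha\bigl(\prod P_i\bigr)\geq \frac{1}{t}\min_i \alpha(P_i)$.

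For the matching upper bound, I would exhibit a test function that depends only on the coordinate $i^\star\in\arg\min_i \alpha(P_i)$. For such an $f$, the tensor structure forces $\mathcal{E}_{\prod P_i}(\sqrt f,\sqrt f)=\frac{1}{t}\mathcal{E}_{P_{i^\star}}(\sqrt f,\sqrt f)$ while $\ent_\pi(f)=\ent_{\pi_{i^\star}}(f)$ by independence. Taking $f$ to be (or approach) an extremizer of the log-Sobolev inequality for $P_{i^\star}$ yields a ratio arbitrarily close to $\frac{1}{t}\alpha(P_{i^\star})=\frac{1}{t}\min_i\alpha(P_i)$, matching the lower bound.

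The main obstacle is the tensorization of entropy; everything else is essentially bookkeeping on the tensor product structure. The cleanest route I know is the inductive/chain-rule argument above, but one could equivalently use the variational representation $\ent_\pi(f)=\sup\{\mathbb{E}_\pi[fg]:\mathbb{E}_\pi[e^g]\leq 1\}$ and reduce to the case of two factors by the associativity of products.
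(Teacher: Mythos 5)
The paper gives no proof of this lemma --- it is quoted verbatim as Lemma 3.2 of \cite{DSC96} --- and your argument is correct and coincides with the standard proof of that result: coordinate-wise decomposition of the Dirichlet form plus tensorization (subadditivity) of entropy yields the lower bound, and test functions depending on a single coordinate yield the matching upper bound. (On a finite state space the infimum defining $\alpha(P_{i^\star})$ is attained, so the limiting language in your upper-bound step is not even needed.)
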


Using \Cref{lem:logsobolevproduct} we obtain the following bound by decomposing $\wt{P}^{\mathsf{grev}}$ into factor chains whose log-Sobolev constants are known.
\begin{lemma}\label{lem:log sobolev Ptildegrev}
    The following bound on the log-Sobolev constant of $\wt{P}^{\mathsf{grev}}$ holds:
    \begin{align*}
        \alpha\pbra{\widetilde{P}^{\mathsf{grev}}} \geq \Omega\pbra{\frac{1}{nk}}.
    \end{align*}
\end{lemma}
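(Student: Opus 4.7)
The plan is to recognize $\widetilde{P}^{\mathsf{grev}}$ as a product chain (in the sense of \Cref{def:product-chain}) whose factors have log-Sobolev constants we have already controlled, and then invoke \Cref{lem:logsobolevproduct}. First I would observe that the state space decomposes as
$$\mathsf{Generic}_{k,n} \;\cong\; (\{0,1\}^{|C|})^k \;\times\; \prod_{t=1}^{p} \Theta_{k,\, 2^w},$$
since the genericity constraint only enforces distinctness of restrictions to each block $C_t$ and places no condition on the bits in $C$. Reading off the definition of $\widetilde{P}^{\mathsf{grev}}$, the chain takes the form $\tfrac{1}{2}(P_C \otimes I + I \otimes P_{\mathsf{blocks}})$, i.e.\ it is the product of two super-factors: a chain $P_C$ on the $k|C|$ bits of the $C$-part, and a chain $P_{\mathsf{blocks}}$ on the block-part.

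Next, I would analyze each super-factor separately. The chain $P_C$ picks a coordinate $(\bm{r}, \bm{c}) \in [k] \times C$ uniformly and flips it with probability $1/2$, so it is the lazy random walk on the hypercube $\{0,1\}^{k|C|}$. This walk itself decomposes as the product over $k|C|$ copies of the $2$-state lazy chain $P^{\compl}_2$, which has constant log-Sobolev constant by \cite{DSC96}, Corollary~A.4. Applying \Cref{lem:logsobolevproduct} gives $\alpha(P_C) \gtrsim 1/(k|C|)$. For $P_{\mathsf{blocks}}$, the key point is that conditional on having chosen block $\bm{\ell}$, the update rule — pick $\bm{r} \in [k]$ uniformly, then sample $\bm{u} \in \{0,1\}^w$ uniformly subject to $\bm{u} \neq x_{i, C_{\bm{\ell}}}$ for all $i \neq \bm{r}$ — is exactly one step of the $k$-clique $2^w$-coloring chain $P^{\cc}_{k, 2^w}$ from \Cref{def:clique-coloring}, since the allowed set of $2^w - (k-1)$ strings coincides with $\{\ell \in [2^w] : \ell \notin x_{\cdot, C_{\bm{\ell}}}\} \cup \{x_{\bm{r}, C_{\bm{\ell}}}\}$. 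Hence $P_{\mathsf{blocks}}$ is the product chain of $p$ independent copies of $P^{\cc}_{k, 2^w}$, and combining \Cref{cor:log-sobolev-clique-coloring} with \Cref{lem:logsobolevproduct} yields $\alpha(P_{\mathsf{blocks}}) \gtrsim 1/(pkw)$.

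Finally, applying \Cref{lem:logsobolevproduct} once more to the two-factor product gives $\alpha(\widetilde{P}^{\mathsf{grev}}) = \tfrac{1}{2}\min(\alpha(P_C), \alpha(P_{\mathsf{blocks}}))$; since $|C| \leq n$ and $pw \leq n$ from the definitions of $p$ and $w$, both arguments of the minimum are $\Omega(1/(nk))$, which gives the claimed bound. I expect the step most in need of care to be verifying that the block-update rule precisely matches \Cref{def:clique-coloring} — in particular, that the ``stay-in-place'' case $\bm{u} = x_{\bm{r}, C_{\bm{\ell}}}$ is counted with the correct probability — so that the $p$-fold product of $P^{\cc}_{k, 2^w}$ fits cleanly into the product-chain framework and \Cref{cor:log-sobolev-clique-coloring} applies as a black box with no additional normalization loss.
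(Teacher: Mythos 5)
Your proposal is correct and follows essentially the same route as the paper: the same decomposition of $\mathsf{Generic}_{k,n}$ into the $C$-part and the $p$ blocks, the same identification of each block factor with the standard $k$-clique $2^w$-coloring chain (so that \Cref{cor:log-sobolev-clique-coloring} applies) and of the $C$-part with a product of two-state chains, and the same nested applications of \Cref{lem:logsobolevproduct}. The point you flag for care — that the block update with $\bm{u}$ ranging over the $2^w-(k-1)$ strings not used by the other elements matches \Cref{def:clique-coloring} exactly — is indeed the identification the paper makes, and it goes through without additional normalization loss.
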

\begin{proof}
    We first write the state space $\mathsf{Generic}_{k,n}$ in the form of a product
    \begin{align*}
        \mathsf{Generic}_{k,n}=&~ \pbra{\prod_{i\in[p]}\Theta_{k,\{0,1\}^w}}\times \pbra{\{0,1\}^{k(n-wp)}}.\footnotemark
    \end{align*}
    Then decompose $\widetilde{P}^{\mathsf{grev}}$ as the product of two Markov chains $\prod\pbra{\cbra{\wt{P}_1, \wt{P}_2}}$. The first chain $\wt{P}_1$ corresponds to performing a change in the $\bigcup_{i \in [p]} C_i$ subset of the bits, and the second chain $\wt{P}_2$ corresponds to operating in the $C$ subset of the bits.

    \footnotetext{Recall from~\Cref{def:distinct-tuples} that $\Theta_{k,\{0,1\}^w}$ denotes the set of $k$-tuples of distinct elements of $\{0,1\}^w$.}
    
    \paragraph{The chain $\widetilde{P}_1$.} The state space of this chain is $\prod_{i\in[p]}\Theta_{k,\{0,1\}^w}$. We further decompose\footnote{We don't directly decompose $\widetilde{P}^{\mathsf{grev}}$ into all of its $t+1$ factors because to use \Cref{lem:logsobolevproduct} we need each factor of the product chain to have equal weight.}
    this chain as $\wt{P}_1=\prod\pbra{\cbra{\wt{P}_{1,\ell}}_{\ell \in [p]}}$, where $\tilde{P}_{1, \ell}$ corresponds to performing an operation on the $C_{\ell}$ subset of the bits.
    Thus the chain $\wt{P}_{1,\ell}$ has state space $\Theta_{k,\{0,1\}^w}$, since it corresponds to the size-$w$ subset $C_{\ell}$. To sample the next state $\bm{y}=(\bm{y}_1,\dots,\bm{y}_k)$ from the current state $\bm{x}=(x_1,\dots,x_k)$, we choose a random $\bm{i}\in[k]$ and a random $\bm{z}\in \{z \in \{0,1\}^w \mid z \notin x\} \cup \{x_{\bm{i}}\}$ and set for each $j\in[k]$
    \begin{align*}
        \bm{y}_j=
        \begin{cases}
            x_j & \text{ if $j\neq \bm{i}$.}\\
            \bm{z} & \text{if $j=\bm{i}$.}
        \end{cases}
    \end{align*}
    Notice that the transition matrix of this chain is equal to the transition matrix $P^{\cc}_{k, \{0, 1\}^w}$ of the standard $k$-clique $2^w$-coloring chain. Therefore, by \Cref{cor:log-sobolev-clique-coloring}, we have for all $\ell\in[p]$ that
    \begin{align*}
        \alpha\pbra{\wt{P}_{1,\ell}} \gtrsim \frac1{k\log{|\{0,1\}^w|}}   = \frac{1}{kw}.
    \end{align*}
    Applying \Cref{lem:logsobolevproduct}, we have 
    \begin{align}\label{eq:Ptilde1 log sobolev}
        \alpha\pbra{\wt{P}_1}=  \frac1p\min_{\ell\in[p]}\alpha\pbra{\wt{P}_{1,\ell}}\gtrsim \frac1p\cdot \frac1{kw}\gtrsim \frac1{nk}.
    \end{align}

    \paragraph{The chain $\widetilde{P}_2$.} We will ``flatten'' the bits from the subset $C$ of the $k$ elements into a sequence of $k(n-wp)$ bits. Then the $\wt{P}_2$ Markov chain corresponds to the random walk on the hypercube $\{0,1\}^{k(n - wp)}$ where to sample the next state $\bm{y}$ from the current state $x$ we sample $\bm{i}\in[k(n-wp)]$ uniformly at random and flip the $\bm{i}^{th}$ bit with probability $\frac12$. This chain is the product chain of $k(n - wp)$ chains on the space $\{0,1\}$ with transition probabilities $\frac{1}{2}$ to each state. We can write the transition matrix of $\wt{P}_2$ as the product
    \begin{align*}
        \prod\pbra{\cbra{\wt{P}_{2, \ell}}_{\ell \in [k(n-wp)]}},
    \end{align*}
    where each $\wt{P}_{2,\ell}$ is the $2 \times 2$ matrix with $\frac{1}{2}$'s. Equivalently, it corresponds to the transition matrix of the complete graph on two states. It is easy to see (e.g.~\cite{DSC96}, Corollary A.4) that $\alpha(\wt{P}_{2,\ell})\geq \frac1{3}$ for all $\ell$. Therefore, by 
    \Cref{lem:logsobolevproduct} we have
    \begin{align}\label{eq:Ptilde2 log sobolev}
        \alpha\pbra{\wt{P}_2}= \frac1{k(n-wp)}\min_{\ell\in[k(n-wp)]}\alpha\pbra{\wt{P}_{2,\ell}} \gtrsim \frac1{k(n-wp)}\gtrsim \frac1{nk}.
    \end{align}
    Applying \Cref{lem:logsobolevproduct} with \Cref{eq:Ptilde1 log sobolev} and \Cref{eq:Ptilde2 log sobolev} yields
    \begin{align*}
        \alpha\pbra{\widetilde{P}^{\mathsf{grev}}}  =&\frac12\min\cbra{\alpha\pbra{\widetilde{P}_1}, \alpha\pbra{\widetilde{P}_2}} = \Omega\pbra{\frac{1}{nk}}.\qedhere
    \end{align*}
\end{proof}

Armed with the log-Sobolev constant of $\wt{P}^{\mathsf{grev}}$, we employ the comparison method of~\cite{BH08} to bound the log-Sobolev constant of $\wt{P}^{\mathsf{rev}}$.

\begin{lemma}[\cite{BH08}, Lemma 16]\label{lem:Ptildegrev to Pgrev}
    There exists a randomized map $\bm{\Psi}$ that takes as input an edge $(x, y)$ of $\widetilde{P}^{\mathsf{grev}}$ and outputs a sequence of edges in $P^{\mathsf{grev}}$ connecting $x$ and $y$ with congestion $A(\bm{\Psi}) = \mathrm{polylog}(n,k)$. Consequently,
    \begin{align*}
        \alpha\pbra{{P}^{\mathsf{grev}}}\geq \frac{\alpha\pbra{\widetilde{P}^{\mathsf{grev}}}}{\polylog(n, k)}.
    \end{align*}
\end{lemma}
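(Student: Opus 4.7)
The plan is to construct the randomized paths $\bm{\Psi}$ case by case on the two move types of $\widetilde{P}^{\mathsf{grev}}$, bound the comparison constant by $\polylog(n,k)$, and then invoke \Cref{thm:comparison}. The log-Sobolev inequality in the lemma is then immediate from the resulting Dirichlet-form inequality $\mathcal{E}_{\widetilde{P}^{\mathsf{grev}}}(f,f)\le A(\bm{\Psi})\cdot \mathcal{E}_{P^{\mathsf{grev}}}(f,f)$, because $P^{\mathsf{grev}}$ and $\widetilde{P}^{\mathsf{grev}}$ are both reversible with the same uniform stationary distribution on $\mathsf{Generic}_{k,n}$; hence $\ent_\pi[f]$ is the same quantity for both chains, and taking the infimum over $f\ge 0$ non-constant gives the desired inequality on log-Sobolev constants.

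For the paths themselves, the key obstacle is the simultaneity of width-$2$ gates: a single gate $f_{c,j_1,j_2,h}\in\Sigma$ XORs $h(x_{i,j_1},x_{i,j_2})$ into coordinate $c$ of \emph{every} element $x_i$ at once, whereas the moves of $\widetilde{P}^{\mathsf{grev}}$ act on only one element $r$. To simulate a bit-flip move that toggles coordinate $c\in C$ of element $r$, I would use the generic property to pick a block $C_\ell$ on which $x_r$ differs from every other element, and then compose $O(w)=O(\log(nk))$ width-$2$ gates that read bits from $C_\ell$ and build an ``indicator'' into a scratch coordinate that equals $1$ precisely on the element whose $C_\ell$-restriction agrees with $x_{r,C_\ell}$. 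XORing this indicator into coordinate $c$ and then uncomputing it flips only bit $(r,c)$. The block-replacement move, which rewrites $x_{r,C_\ell}$ to a new string $\bm{u}$, is realized by sequentially flipping the differing bits of $C_\ell$ using the same conditional-XOR subroutine; the main bookkeeping subtlety is arranging these rewrites so that every intermediate state remains generic, since otherwise the path would leave $\mathsf{Generic}_{k,n}$ and not lie in $P^{\mathsf{grev}}$. Both kinds of paths have length $\polylog(n,k)$ by construction.

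To bound $A(\bm{\Psi})$, fix an edge $(a,b)$ of $P^{\mathsf{grev}}$ and count how many of the random paths pass through it. Since each simulated trajectory is essentially determined by its endpoints $(x,y)$ together with an intermediate time step, and each path has length $\polylog(n,k)$, only $\polylog(n,k)$ choices of $(x,y)$ (properly weighted by the internal randomness of $\bm{\Psi}$) route through $(a,b)$. The weight ratio $\widetilde{\pi}(x)\widetilde{P}(x,y)/\pbra{\pi(a)P(a,b)}$ contributes at most another $\polylog(n,k)$ factor: the stationary parts cancel, and the $P^{\mathsf{grev}}$ transition probability $\Theta(1/n^3)$ for any given width-$2$ gate comfortably dominates the per-move probabilities of $\widetilde{P}^{\mathsf{grev}}$ once one accounts for the large multiplicity of $(x,y)$ pairs that contribute to a block-replacement edge. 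Combining these two factors yields $A(\bm{\Psi})=\polylog(n,k)$, and \Cref{thm:comparison} completes the proof.
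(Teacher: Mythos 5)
The paper does not actually prove this lemma: it is imported wholesale from Brodsky--Hoory (Lemma~16 of \cite{BH08}), so there is no in-paper argument to measure your proposal against. Your overall architecture --- randomized paths built move-by-move, a congestion bound, then passage from the Dirichlet-form inequality to the log-Sobolev inequality via a shared stationary distribution --- is the right one, and your path construction (use a block $C_{\bm\ell}$ on which $x_r$ is distinguished to build a width-$2$-gate AND-tree indicator with dirty ancillas, apply the controlled flip, uncompute) is essentially the Gowers/\cite{BH08} construction. One small inaccuracy: the stationary distribution of $P^{\mathsf{grev}}$ as defined here is not exactly uniform on $\mathsf{Generic}_{k,n}$; by detailed balance it is proportional to $Z_x=\sum_{z\in\mathsf{Generic}_{k,n}}P^{\mathsf{rev}}(x,z)$, which varies with $x$. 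Since $Z_x=1-o(1)$ uniformly this only costs a $1+o(1)$ factor, but your ``same uniform stationary distribution'' step should acknowledge it.

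The genuine gap is in the congestion bound, where your key quantitative claim is backwards. A fixed width-$2$ gate has probability $\Theta(n^{-3})$ under $P^{\mathsf{rev}}$, whereas a bit-flip move of $\widetilde{P}^{\mathsf{grev}}$ has probability $\Theta\pbra{1/(nk)}$ (probability $\tfrac14$ of entering that branch, times $1/(|C|k)$ with $|C|=\Theta(n)$). Hence the ratio $\widetilde{P}(x,y)/P^{\mathsf{grev}}(a,b)$ is of order $n^2/k$, polynomially large whenever $k\ll n^2$: the gate probability does not ``comfortably dominate'' the move probability. To conclude $A(\bm\Psi)=\mathrm{polylog}(n,k)$ one must therefore show that for each fixed edge $(a,b)$ the weighted count $\sum_{(x,y)}\Pr\sbra{(a,b)\in\bm\Psi(x,y)}$ is as small as $k\cdot\mathrm{polylog}(n,k)/n^2$, i.e.\ well below $1$. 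This is exactly where the randomization of the paths (the random choice of control block among the $p=\Theta(n/w)$ blocks, the random assignment of scratch wires in the AND-tree) and an encoding argument recovering $(x,y)$ from $(a,b)$ plus a short hint must do real work; your sketch names these ingredients but does not carry out the count, and that count is the entire content of the lemma.
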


\begin{corollary}\label{cor:grev log sobolev}
    It holds that
    \begin{align*}
        \alpha(P_{\mathsf{grev}})\gtrsim {\frac{1}{nk\cdot \polylog(n, k)} }
    \end{align*}
\end{corollary}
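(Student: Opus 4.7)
The plan is a direct two-step chaining of the two results immediately preceding the corollary, requiring no new argument. First, I would apply Lemma \ref{lem:log sobolev Ptildegrev} to obtain the log-Sobolev bound $\alpha(\widetilde{P}^{\mathsf{grev}}) \gtrsim 1/(nk)$ for the auxiliary product-style chain. Second, I would invoke Lemma \ref{lem:Ptildegrev to Pgrev}, which provides a randomized paths comparison from $\widetilde{P}^{\mathsf{grev}}$ to $P^{\mathsf{grev}}$ with comparison constant $A(\bm{\Psi}) = \mathrm{polylog}(n, k)$, and thus (via Lemma \ref{thm:comparison}) yields $\alpha(P^{\mathsf{grev}}) \geq \alpha(\widetilde{P}^{\mathsf{grev}})/\mathrm{polylog}(n, k)$; note that both chains have the same stationary distribution (uniform on $\mathsf{Generic}_{k,n}$) so the comparison of Dirichlet forms transfers cleanly to a comparison of log-Sobolev constants.

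Composing the two inequalities gives $\alpha(P^{\mathsf{grev}}) \gtrsim \frac{1}{nk \cdot \mathrm{polylog}(n, k)}$, which is exactly the claim. There is no real obstacle to overcome at this step: both ingredients are already proven, and the corollary collapses to a one-line calculation. All of the conceptual work lives in proving Lemma \ref{lem:log sobolev Ptildegrev} (the product-chain decomposition of $\mathsf{Generic}_{k,n}$ into clique-coloring factors $P^{\cc}_{k,2^w}$ on the blocks $C_\ell$ and two-state complete-graph factors on the residual bits $C$, combined through Lemma \ref{lem:logsobolevproduct} and Corollary \ref{cor:log-sobolev-clique-coloring}) and in the Brodsky--Hoory randomized paths construction underlying Lemma \ref{lem:Ptildegrev to Pgrev}.
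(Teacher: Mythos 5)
Your proposal is correct and matches the paper exactly: the corollary is obtained by chaining \Cref{lem:log sobolev Ptildegrev} with the comparison bound of \Cref{lem:Ptildegrev to Pgrev}, with no additional argument needed. Your observation that the two chains share the uniform stationary distribution on $\mathsf{Generic}_{k,n}$ (so the Dirichlet-form comparison transfers to log-Sobolev constants) is the one point the paper leaves implicit, and it is right.
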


Using now the well-known relation between the log-Sobolev constant and the mixing time of a Markov chain in total variation distance, we conclude: 

\main*

\begin{proof}
    Combining \Cref{lem:log sobolev Ptildegrev} and \Cref{lem:Ptildegrev to Pgrev} we find that $\alpha(P^{\mathsf{grev}})\geq \Omega\pbra{\frac1{nk}}$. This implies that for the constant $\epsilon'>0$ referenced in \Cref{lem:Pgrev to Prev}, we have $\tau_{\epsilon'}(P^{\mathsf{grev}})\leq O(nk\cdot \mathrm{polylog}(n,k))$. Then applying \Cref{lem:Pgrev to Prev} we have
    \begin{align*}
        \tau(P^{\mathsf{rev}})\leq&\tau_{\epsilon'}(P^{\mathsf{grev}})+O(nk\cdot\polylog(n,k))\leq O(nk\cdot\mathrm{polylog}(n,k)).
    \end{align*}
    Finally, we can decrease the total variation distance down to an arbitrary $\epsilon > 0$ by increasing the length of the walk by a multiplicative factor of $O(\log(1/\epsilon))$, and the statement follows.
\end{proof}

\section*{Acknowledgments}

We thank Thiago Bergamaschi, Tianren Liu, Stefano Tessaro, Vinod Vaikuntanathan, Alistair Sinclair, and Ryan O'Donnell for very helpful and insightful discussions.

\bibliographystyle{alpha}
\bibliography{references}

\end{document}